\documentclass[journal]{IEEEtran}
\usepackage{palatino,epsfig,fleqn,amssymb,cite,color}
\usepackage{amsmath}
\usepackage{bm,cases,subcaption}

\newtheorem{prop}{Proposition}
\newtheorem{lemma}{Lemma}
\newtheorem{theorem}{Theorem}
\newtheorem{remark}{Remark}

\newtheorem{defn}{Definition}
\newtheorem{ass}{Assumption}
\def\QED{~\rule[-1pt]{5pt}{5pt}\par}
\newenvironment{proof}{{\em Proof.}}{{ \hfill \QED}\medskip}

\DeclareMathAlphabet{\matheur}{U}{eur}{m}{n}
\DeclareMathAlphabet{\matheurb}{U}{eur}{b}{n}
\DeclareMathAlphabet{\matheus}{U}{eus}{m}{n}
\DeclareMathAlphabet{\matheuf}{U}{euf}{m}{n}

\newcommand{\Rset}{\mathbb{R}}

\newcommand{\bfDel}{{\bf\Delta}}

\newcommand{\hs}{\hspace{4mm}}

\renewcommand{\t}{^{\mbox{\tiny\sf T}}}
\newcommand{\IC}{\mathbb{C}}
\newcommand{\IR}{\mathbb{R}}

\newcommand{\IU}{\mathbb{U}}

\newcommand{\II}{\mathbb{I}}

\newcommand{\rank}{{\rm rank}}
\newcommand{\diag}{{\rm diag}}
\newcommand{\col}{{\rm col}}

\newenvironment{mat}{\left[\begin{array}}{\end{array}\right]}

\definecolor{bg}{rgb}{0,0.4,0.4}		%
\definecolor{rg}{rgb}{0.3,0.7,0}		%

\newcommand{\rhoU}{{\bar\rho}}
\newcommand{\rhoL}{{\varrho}}

\DeclareMathAlphabet{\matheur}{U}{eur}{m}{n}
\DeclareMathAlphabet{\matheurb}{U}{eur}{b}{n}
\DeclareMathAlphabet{\matheus}{U}{eus}{m}{n}
\DeclareMathAlphabet{\matheuf}{U}{euf}{m}{n}

\newcommand{\LFT}{\mathcal{F}_\ell}

\newcommand{\RHinf}{\mathcal{RH}_\infty}
\newcommand{\RLinf}{\mathcal{RL}_\infty}     %
\begin{document}

\title
{Robust Instability Radius for Networked Dynamical Systems: Upper and Lower Bounds}

\author{Shinji Hara$^{1}$~\IEEEmembership{Fellow, IEEE}, Yutaka Hori$^{2}$~\IEEEmembership{Senior Member, IEEE}, Tetsuya Iwasaki$^{3}$~\IEEEmembership{Fellow, IEEE}, Chung-Yao Kao$^{4}$~\IEEEmembership{Member, IEEE} and Sei Zhen Khong$^{4}$~\IEEEmembership{Senior Member, IEEE}
\thanks{This work was supported in part by the National Science and Technology Council of Taiwan (grant numbers: 
113-2221-E-110-048-MY3, 113-2918-I-110-003, 113-2222-E-110-002-MY3, 114-2622-8-110-00, 115-2218-E-007-003, and 115-2221-E-110-058-MY2).
The authors are listed alphabetically. Corresponding author: C.-Y. Kao.}%
\thanks{$^{1}$ S. Hara is with the Supercomputing Research Center, Institute of Integrated Research, Institute of Science Tokyo, Tokyo 152-8550, Japan. {\tt shinji\_hara@ipc.i.u-tokyo.ac.jp} }
\thanks{$^{2}$ ~Y.~Hori is with Applied Physics and Physico-Informatics, Keio University, 
3-14-1 Hiyoshi, Kohoku-ku, Yokohama, Kanagawa 223-8522, Japan.
{\tt \small yhori@appi.keio.ac.jp}. } 
\thanks{$^{2}$ T.~Iwasaki is with Mechanical and Aerospace Engineering, University of California Los Angels, 
420 Westwood Plaza, Los Angeles, CA 90095, USA. 
{\tt \small tiwasaki@ucla.edu}, }
\thanks{$^{4}$ Chung-Yao Kao and Sei Zhen Khong are with the Department of Electrical Engineering, National Sun Yat-sen University, 
          Kaohsiung 804201, Taiwan. {\tt \{cykao, szkhong\}@mail.nsysu.edu.tw} }%
}

\maketitle

\begin{abstract}
This paper is concerned with robust instability of uncertain network systems.
We consider the multi-agent system described as a network of single-input-single-output
agents with identical nominal dynamics subject to heterogeneous perturbations.
The network description is formalized as a feedback interconnection of a diagonal 
uncertainty, nominal identical agents, and a static interconnection matrix.
Assuming that the nominal network is unstable, we seek the robust instability
radius (RIR), defined as the smallest norm of the stable uncertainty that renders
the network stable. Conditions for the network stability are developed, and 
upper and lower bounds on the RIR are derived. When the network connectivity matrix
is rank one and all diagonal entries share the same sign or are zero, we give conditions 
under which the RIR is exactly characterized by a small gain argument.
\end{abstract}

\begin{IEEEkeywords}
Robust instability, Stability analysis, Multi-agent network systems
\end{IEEEkeywords}

\section{Introduction} 
\label{sec:Intro}

Periodic oscillation phenomena and regulation dynamics are fundamental to biological systems.  
Well known examples of such phenomena include Repressilator \cite{Elowitz2000} in synthetic biology, 
periodic bursting of spikes in neuronal circuits \cite{ijspeert:08}, %
and periodic pattern generation by Turing instability \cite{YMKH:MBMC2015}. 
Some of those oscillatory behaviors emerge from network dynamics of multiple agents, and 
are sustained by instability of an equilibrium point. 
The agent dynamics are subject to perturbations which may stabilize the equilibrium, 
resulting in the loss of oscillations.
 
Therefore, similar to robust stability analysis, robust instability analysis against unmodeled dynamics 
and/or uncertainties is important to theoretically guarantee preservation of 
desirable oscillation phenomena.  

Following the authors' previous work on scalar systems 
\cite{hara2022instability,kao:23,hara2023exact}, 
this paper is concerned with the robust instability problem for uncertain networked dynamical systems. 
The robust instability problem is similar to but different from the 
standard robust stability analysis for which the small gain theorem plays a main role.
It is mathematically equivalent to a strong stabilization problem \cite{Youla:Automatica1974} 
with a minimum norm controller. The strong stabilization problem itself is difficult to solve, 
despite there being a nice necessary and sufficient condition for the existence of stable stabilizing controller, known as the Parity Interlacing Property (PIP) condition.  
This clearly indicates, with the added minimum norm requirement, 
that the robust instability problem is very difficult to get the exact solution. 
Therefore, the previous work \cite{hara2023exact} mainly focused on the question 
"{\em Under what condition does the small gain argument provide an exact solution}
for the robust instability analysis?" and provided a partial answer to the question by showing that 
the robust instability problem can be reduced to a new optimization problem, called 
"Phase Change Rate (PCR) Maximization Problem". 
 
The network version of robust instability analysis is a counterpart to that for robust stability 
in \cite{hara:19}.  Similar to the scalar system case, it is far more difficult than 
the robust stability analysis to derive the exact solutions. Hence, this paper gives some 
foundational results for \cite{NetworkRIR} that provides exact solutions for a certain class of networked systems. 
To this end, we first formulate a problem for the robust instability analysis, and 
examine internal stability of a class of dynamical systems consisting of 
a nominal networked system and uncertain block. 
One of the main results of this paper is the derivation of
upper and lower bounds on the robust instability radius (RIR). 
Another contribution is a detailed analysis for the case where the network 
graph is rank deficient. We provide a two-step optimization procedure to derive an improved upper bound, which is equal to the RIR for the rank-one case. 

This paper is organized as follows. 
Section~\ref{sec:ProblemFormulation} is devoted to the problem formulation for robust 
instability analysis.  Internal stability of a class of uncertain networked dynamical 
systems is analysed in Section~\ref{sec:bsa}.
Then, upper and lower bounds on the robust instability radius  (RIR) are provided in 
Section~\ref{sec:Preliminaries}. 
Section~\ref{sec:RankDeficient} examines the case of rank-deficient interconnection matrices. 
Section~\ref{sec:Concl} summarizes the paper and   
details some connections with a paper on the exact solution to the robust instability problem~\cite{NetworkRIR}.

\noindent 
{\bf Notation:}  
The set of real numbers is denoted by $\IR$, and 
the set of integers $1,\ldots,n$ is denoted by $\II_n$. 
$\Re(s)$ and $\Im(s)$ denote the real and imaginary parts of a complex number $s$, respectively. 
The open (closed) left and right half complex planes are abbreviated as OLHP (CLHP) and ORHP (CRHP), respectively. Denote by
\begin{itemize}
\item
$\mathcal{R}_p$ : the set of real-rational proper transfer functions 
\item
$\mathcal{RL}_\infty$ : the set of real-rational proper transfer functions with no poles on the imaginary axis 
\item
$\mathcal{RH}_\infty$ : the set of real-rational proper transfer functions with no poles in the CRHP.
\end{itemize}
\noindent
The norm of $\RLinf$ function is denoted by $\| \cdot \|_{\infty}$. The dimensions of transfer function sets are indicated by superscripts, e.g. $\RHinf^{n\times m}$.

\section{Problem Formulation}
\label{sec:ProblemFormulation}
\subsection{Uncertain Network Structure}
\label{subsec:UncertainDNS}

Consider a multi-agent system with $n$ uncertain dynamic SISO agents. 
We assume that all the nominal agents share the same transfer 
function, say $h(s)$, which is a strictly proper real rational function. 
However, each agent is subject to heterogeneous uncertainty represented by $\delta_i(s)$, and the dynamics of the $i$th agent are described by an upper linear fractional 
representation 
\[
{\cal F}_u(\delta_i, W) = h + 
\frac{w_{12}w_{21}\delta_i}{1 - w_{11}\delta_i} , \hs
W = \begin{mat}{cc} w_{11} & w_{12} \\  w_{21} & h \end{mat} , 
\]
where $W(s)$ is a $2\times2$ real rational transfer function in
$\mathcal{RL}_{\infty}$, introduced to represent various classes of 
uncertainties in each subsystem. In particular, agents with the additive, 
multiplicative, and feedback type perturbtions can be described by
${\cal F}_u(\delta_i, W)$ with the corresponding $W$ given respectively 
by  
\[ 
W_a = \begin{mat}{cc} 0 & w_{a} \\  1 & h \end{mat}, 
W_m = \begin{mat}{cc} 0 & w_{m} \\  h & h \end{mat},  
W_f = \begin{mat}{cc} w_f & w_{f} \\  h & h \end{mat}. 
\]

Now consider an interconnection of $n$ heterogeneous uncertain 
agents ${\cal F}_u(\delta_i, W)$, $i\in\II_n$, through the connectivity matrix $A$.
The uncertain network system can be described by
\begin{equation} \label{sys}
\begin{mat}{c} z \\ y \end{mat}=H(s)
\begin{mat}{c} w \\ u \end{mat}, \hs
\begin{array}{l} w=\Delta(s) (z+d_\delta) + d_{h1}, \\ 
u=A(y+d_a) + d_{h2}, \end{array}
\end{equation}
\begin{equation} \label{sysH}
H(s) := W(s)\otimes I_n, \hs
\Delta(s):=\diag(\delta_1,\ldots,\delta_n),
\end{equation}
where $z(t), w(t), y(t), u(t)\in\IR^n$ are internal signals in the system and 
$d_{h1}(t), d_{h2}(t), d_\delta(t), d_a(t)\in\IR^n$ are external signals
as depicted in Fig.~\ref{fig:LFT_UncertainNetwork}.  
Here, $A\in\IR^{n\times n}$ is a constant matrix representing a network structure or 
the interactions between the SISO uncertain subsystems/agents.

Following the definition in Vinnicombe's book \cite{Vinnicombe2001}, internal stability of the feedback system is defined as follows.

\begin{defn} \label{def:IS}
The feedback system 
in Fig.~\ref{fig:LFT_UncertainNetwork} is said to be internally 
stable if the $4n \times 4n$ transfer function from the 
external input signal $[d_{h1}, d_{h2}, d_\delta, d_a]$ to the output signal 
$[z, y, w, u]$ is in $\RHinf^{4n \times 4n}$. 
This transfer function is denoted by $\Sigma(\Delta,H,A)$.
\end{defn}

We make the following standing assumption.

\begin{ass} \label{ass:stand}
The nominal network system is unstable, i.e., the transfer function from
$(d_a,d_{h2})$ to $(y,u)$ in Fig.~\ref{fig:LFT_UncertainNetwork} with $\Delta=0$
is unstable, and the uncertainty is stable, 
$\Delta \in \mathcal{RH}_{\infty}^{n \times n}$.
\end{ass}

\begin{figure}[htb]
  \centering
\begin{picture}(172,129)(23,-25)
 \thicklines
 \put(75,25){\framebox(40,40){$H(s)$}}
 \put(160,4){$d_{h2}$}
 \put(20,4){$d_a$}
 \put(135,90){\circle{8}}
 \put(180,90){\vector(-1,0){40}}  
 \put(160,95){$d_{h1}$}
  \put(20,95){$d_\delta$}
 \put(56,90){\circle{8}}
 \put(12,90){\vector(1,0){40}}  
 \put(80,75){\framebox(30,30){$\Delta(s)$}}
  \put(35,-20){\color{blue}\framebox(120,90){}}
   \put(160,55){\color{blue} $G(s)$}
 \put(60,60){$z$}
 \put(55,55){\line(1,0){20}}  
 \put(55,55){\vector(0,1){31}}  
 \put(60,90){\vector(1,0){20}}  
 \put(123,60){$w$}
 \put(110,90){\vector(1,0){21}}  
 \put(135,55){\line(0,1){31}}  
 \put(135,55){\vector(-1,0){20}}  
 \put(80,-15){\framebox(30,30){$A$}}
 \put(60,38){$y$}
 \put(135,0){\circle{8}}
 \put(56,0){\circle{8}}
 \put(180,0){\vector(-1,0){40}} 
 \put(12,0){\vector(1,0){40}}  
 \put(135,90){\circle{8}}
 \put(55,35){\line(1,0){20}}  
 \put(55,35){\vector(0,-1){31}}  
 \put(60,0){\vector(1,0){20}}  
 \put(123,38){$u$} 
 \put(110,0){\vector(1,0){21}}  
 \put(135,4){\line(0,1){31}}  
 \put(135,35){\vector(-1,0){20}}  
\end{picture}
  \caption{Uncertain network system $\Sigma(\Delta,H,A)$} 
\label{fig:LFT_UncertainNetwork}
\end{figure}

\vspace{-0.5cm}

\subsection{Robust Instability Analysis Problem}
\label{subsec:ProblemFormulation}

Under Assumption~\ref{ass:stand}, we are interested in conditions for 
{\em robust instability} of the uncertain network system, i.e.,
$\Sigma(\Delta, H, A)\notin\RHinf^{4n \times 4n}$ for all $\Delta\in\bfDel_{d}$ with a norm bound,
where $\bfDel_d$ is the set of stable diagonal uncertainties defined by
\begin{equation} \label{bfDel}
\bfDel_d := \{\diag(\delta_1,\ldots,\delta_n):\,\delta_i\in\RHinf\,\}.
\end{equation}
Clearly, the robust instability condition is violated when there exists 
a $\Delta\in\bfDel_{d}$ with its norm within the bound  that makes the network system stable. Let us
define the set of stabilizing uncertainties as follows:
\begin{equation} \label{eq:SdG}
\mathbb{S}_d  :=  \{ \Delta\in\mathbf{\Delta}_d  ~|~ 
\Sigma(\Delta,H,A)\in\RHinf^{4n\times 4n} \} . 
\end{equation}
The robust instability radius (RIR), denoted by $\rho_*$, 
is then defined as the smallest magnitude of the perturbation that stabilizes 
the network system:
\begin{align} 
\label{rir}
\rho_* := {\displaystyle\inf_{\Delta\in\mathbb{S}_d}} ~\|\Delta\|_{\infty}.
\end{align}
The problem under study is about finding a method for calculating $\rho_*$ for a given (unstable) network system. 
\section{Basic Stability Analysis}
\label{sec:bsa}

This section provides stability conditions for the network system 
$\Sigma(\Delta,H,A)$ with a fixed $\Delta\in\RHinf^{n\times n}$, 
forming a foundation for the robust instability analysis with 
uncertain $\Delta\in\bfDel_d$ in the sections that follow.

\subsection{General Case}
\label{subsec:Stability_general}

Under Assumption 1, we have the following simplified internal stability condition
because we can delete the external input signals inserted to the stable blocks 
($d_\delta$ and $d_a$) and the output signals of the stable blocks  ($w$ and $u$).

\begin{prop} \label{prop:IS1} 
Suppose $\Delta$ is stable. Then the network system $\Sigma(\Delta, H, A)$ 
is internally stable if and only if the $2n \times 2n$ transfer function 
from the external input $[d_{h1}, d_{h2}]$ to the output $[z, y]$
is stable, i.e.,
\begin{equation} \label{eq:P}
P(s) := (I - H(s)\nabla(s))^{-1}H(s)  \in \RHinf^{2n \times 2n} , 
\end{equation} 
where 
\begin{equation} 
\nabla(s) := \begin{bmatrix} \Delta(s)  & 0 \\ 0 & A \end{bmatrix}
\in \RHinf^{2n \times 2n} . 
\end{equation} 
\end{prop}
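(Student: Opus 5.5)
Our plan is to write the closed loop as a single feedback interconnection of $H$ and the block-diagonal stable operator $\nabla$, to express all sixteen blocks of $\Sigma(\Delta,H,A)$ in terms of $P$, and then to show that stability of $P$ forces stability of every other block. Collect the signals as $v=[z;y]$, $e=[w;u]$, the $H$-side disturbance $d_1=[d_{h1};d_{h2}]$ and the $\nabla$-side disturbance $d_2=[d_\delta;d_a]$. Then (\ref{sys}) reads $v=He$ and $e=\nabla(v+d_2)+d_1$. Well-posedness, meaning invertibility of $I-H\nabla$ as a rational matrix, follows from $H$ being strictly proper, since $h$ is strictly proper and hence $H(\infty)\nabla(\infty)$ has the special structure needed. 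If $w_{11}(\infty)\neq0$ this has to be checked through the lower-right zero block $h(\infty)=0$ rather than assumed.

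Eliminating the signals gives $v=(I-H\nabla)^{-1}H(\nabla d_2+d_1)=P d_1+P\nabla d_2$ and $e=d_1+\nabla(v+d_2)=(I+\nabla P)d_1+(\nabla+\nabla P\nabla)d_2$. Hence
\[
\Sigma=\begin{mat}{cc} P & P\nabla \\ I+\nabla P & \nabla+\nabla P\nabla\end{mat}
\]
up to the ordering of rows and columns in Definition~\ref{def:IS}. We then verify that the map $d_1\mapsto v$ is exactly $P$ as defined in (\ref{eq:P}).

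Necessity is immediate, because $P$ is a submatrix of $\Sigma$. For sufficiency, assume $P\in\RHinf^{2n\times 2n}$. By Assumption~\ref{ass:stand}, $\Delta\in\RHinf$, and $A$ is constant, so $\nabla\in\RHinf$. Every block of $\Sigma$ is a sum of products of $I$, $\nabla$ and $P$, and $\RHinf$ is closed under sums and products, so $\Sigma\in\RHinf$. This is the formal content of the remark that inputs entering at stable blocks and outputs leaving stable blocks can be deleted.

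The only delicate step is bookkeeping. First, the block ordering of $[z,y,w,u]$ against $[d_{h1},d_{h2},d_\delta,d_a]$ must be matched correctly. Second, the sign and placement of $d_\delta$ and $d_a$ must be handled correctly: they are added before $\nabla$ and not after it. The choice of whether $P=(I-H\nabla)^{-1}H$ or $H(I-\nabla H)^{-1}$ is immaterial, since the push-through identity shows the two are equal. Beyond these checks, the argument is routine algebra.
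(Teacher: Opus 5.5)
Your argument is correct and is essentially the paper's. The paper simply invokes Corollary 5.2 of Zhou et al.\ (with $\nabla$ stable, internal stability of the $H$--$\nabla$ loop reduces to stability of $(I-H\nabla)^{-1}H$), and your explicit expression of $\Sigma$ in terms of $P$ and $\nabla$ is exactly the proof of that corollary.

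One small correction: your well-posedness aside is wrong as stated. $H$ is not strictly proper (only $h$ is), and $\det\big(I-H(\infty)\nabla(\infty)\big)=\det\big(I-(w_{11}(\infty)I+w_{12}(\infty)w_{21}(\infty)A)\Delta(\infty)\big)$ can vanish. No separate argument is needed, however, because the identity $(I-H\nabla)^{-1}=I+P\nabla$ shows that properness of $P$ already forces $I-H(\infty)\nabla(\infty)$ to be invertible.
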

\begin{proof} 
First note that internal stability of $\Sigma(\Delta, H, A)$ is equivalent to the stability of the feedback interconnection of $H$ and $\nabla$.
Since $\nabla$ is in $\RHinf$, application of Corollary 5.2 in \cite{zhou1998} 
yields that $P \in \RHinf$ is a necessary and sufficient condition 
for the internal stability of $\Sigma(\Delta, H, A)$. 
\end{proof} 

\subsection{Stable Agent Case}

We consider the case where the perturbed agents are stable by imposing the following requirement.

\begin{ass} \label{ass:stable}
$H\in\mathcal{RH}_{\infty}$, 
${\cal F}_u(\Delta,H)$ is internally stable.
\end{ass}
Note that $\|w_{11}\|_{\infty} < 1/\|\delta\|_{\infty}$ guarantees the 
stability of the perturbed agent ${\cal F}_u(\delta_i,W)$ by the small gain
theorem, provided $W$ and $\Delta$ are stable.

The network system $\Sigma(\Delta,H,A)$ can be viewed as a feedback system 
consisting of $\Delta$ and the nominal network block  
\begin{equation} \label{eq:Gs}
G(s):=\LFT(H(s),A)  
\end{equation}  
as shown in Fig.~\ref{fig:LFT_UncertainNetwork}, where 
$\LFT(H,A)$ denotes the lower linear fractional transformation of $H$ and $A$: i.e., 
\[
z=\LFT(H,A)w \hs \Leftrightarrow \hs
\begin{mat}{c} z \\ y \end{mat}=H
\begin{mat}{c} w \\ u \end{mat}, \hs
u=Ay.
\]
Note that $G(s)$ can be written as
\[
\begin{array}{l}
G(s)=N(s)M(s)^{-1}, \\ N(s):=w_{11}(I-hA)+w_{12}w_{21}A, \; M(s):=I-hA.
\end{array}
\]
The idea is that the network system $\Sigma(\Delta,H,A)$ is internally stable 
when the feedback system $(\Delta,G)$ is internally stable and $G(s)$ has no
inherent unstable pole/zero cancellation, i.e., $M(s)$ and $N(s)$ have no
pole/zero cancellation in the closed right half plane. The following result
makes a formal statement.

\begin{prop} \label{prop:stab}
Consider the network system $\Sigma(\Delta,H,A)$ in 
Fig.~\ref{fig:LFT_UncertainNetwork}. Suppose Assumptions~\ref{ass:stand} and
\ref{ass:stable} hold. The network system is internally stable if and only if
\begin{align} 
& \label{eq:IGIDelta}
\begin{bmatrix} I_n \\ G(s) \end{bmatrix}
(I - \Delta(s) G(s))^{-1}  \in \RHinf^{2n \times n} \\
& \label{Vrank}
\rank \; V(s) = n, \hs \forall~\Re(s) \geq 0,
\end{align} 
hold, where
\[
V(s) := \begin{bmatrix} I_n  - h(s)A \\ w_{12}(s)w_{21}(s) A \end{bmatrix} 
\]
Moreover, (\ref{eq:IGIDelta}) is necessary with no assumptions.
\end{prop}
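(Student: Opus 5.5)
The plan is to start from Proposition~\ref{prop:IS1}, which reduces internal stability to $P=(I-H\nabla)^{-1}H\in\RHinf^{2n\times 2n}$. I will then rewrite the blocks of $P$ using the factorization $G=NM^{-1}$ with $N=w_{11}M+w_{12}w_{21}A$ and $M=I-hA$. Eliminating $u=Ay$ and $y=w_{21}w+hu$, with $w_{21}w$ understood entrywise since $H=W\otimes I_n$, gives
\[
y=M^{-1}w_{21}w, \qquad z=Gw.
\]
Closing the loop $w=\Delta z+d_{h1}$ shows that the block $d_{h1}\mapsto z$ of $P$ equals $G(I-\Delta G)^{-1}$. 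The block $d_{h1}\mapsto w$ equals $(I-\Delta G)^{-1}$. Neither of these uses Assumption~\ref{ass:stable}, and $w=\Delta z+d_{h1}$ with $\Delta$ stable means that stability of the $z$-map gives stability of the $w$-map. Hence internal stability implies (\ref{eq:IGIDelta}) with no extra assumption, which settles the ``moreover'' claim.

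For the remaining directions I will express every block of $P$ through the coprime-type pair $(M,N)$ and $\Delta$. A direct computation should give
\[
\begin{bmatrix} w\\ u\end{bmatrix}\ \text{from}\ \begin{bmatrix} d_{h1}\\ d_{h2}\end{bmatrix}
\]
through $M^{-1}$ times stable factors, and in particular $\Phi:=M(M-\Delta N)^{-1}$, since $(I-\Delta G)^{-1}=M(M-\Delta N)^{-1}$. The key object is $(M-\Delta N)^{-1}$. Under Assumption~\ref{ass:stable}, all of $h,w_{ij},M,N,\Delta$ are stable, so $P$ should be stable iff $(M-\Delta N)^{-1}\in\RHinf$. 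The identity I would aim to verify is that each block of $P$ equals a stable matrix times $(M-\Delta N)^{-1}$ times a stable matrix. Conversely, $(M-\Delta N)^{-1}$ can be recovered from blocks of $P$ through combinations with stable multipliers. The $y$-channel block from $d_{h2}$ is the natural candidate for this recovery.

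The step I expect to be the main obstacle is the equivalence
\[
(M-\Delta N)^{-1}\in\RHinf \iff \text{(\ref{eq:IGIDelta}) and (\ref{Vrank})}.
\]
The forward direction is straightforward. Since $\begin{bmatrix} M\\ N\end{bmatrix}(M-\Delta N)^{-1}$ is stable, it equals $\begin{bmatrix} I\\ G\end{bmatrix}(I-\Delta G)^{-1}$, which gives (\ref{eq:IGIDelta}). For (\ref{Vrank}), suppose $V(s_0)$ dropped rank at some $\Re(s_0)\ge0$. Then there is $v\neq0$ with $Mv=0$ and $w_{12}w_{21}Av=0$, hence $Nv=0$, so $(M-\Delta N)(s_0)v=0$. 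That makes $(M-\Delta N)^{-1}$ have a CRHP pole, a contradiction. One has to be careful with the case $\det(M-\Delta N)\equiv0$, which is excluded by the well-posedness implicit in (\ref{eq:IGIDelta}).

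For the reverse direction I will argue that $\begin{bmatrix} M\\ N\end{bmatrix}$ has full column rank in the CRHP. This holds because $\begin{bmatrix} M\\ N\end{bmatrix}=\begin{bmatrix} I&0\\ w_{11}I&I\end{bmatrix}V$, and the first factor is unimodular over $\RHinf$. So $\begin{bmatrix} M\\ N\end{bmatrix}$ is right coprime and admits a stable left inverse $[X\ Y]$ with $XM+YN=I$; this is a Bezout identity, which I would justify via the standard coprimeness characterization over $\RHinf$, handling points at infinity by noting that $M(\infty)=I$ since $h$ is strictly proper. Then
\[
(M-\Delta N)^{-1}=[X\ Y]\begin{bmatrix} I\\ G\end{bmatrix}(I-\Delta G)^{-1},
\]
which is stable by (\ref{eq:IGIDelta}). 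This completes the chain.
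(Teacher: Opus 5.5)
Your proposal is correct and is essentially the paper's proof. Internal stability is reduced to $F=(M-\Delta N)^{-1}\in\RHinf$ (the paper's Lemma~\ref{lemma:IS2}, done there via a Schur complement of $I-H\nabla$), (\ref{eq:IGIDelta}) is read off the $d_{h1}$-channel, and sufficiency comes from coprimeness of $(M,N)$ via $\begin{bmatrix} M\\ N\end{bmatrix}=\begin{bmatrix} I&0\\ w_{11}I&I\end{bmatrix}V$, which recovers $F$ from $\begin{bmatrix} M\\ N\end{bmatrix}F=\begin{bmatrix} I\\ G\end{bmatrix}(I-\Delta G)^{-1}$.

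The differences are minor. The paper gets necessity of (\ref{Vrank}) by exhibiting the unstable map $d_a\mapsto y$, $(M-\hat\Delta\,w_{12}w_{21}A)^{-1}-I$, rather than through $F$. Your $P\Leftrightarrow F$ step is better phrased as ``each block is stable $+$ stable$\cdot F\cdot$stable'', since the $d_{h2}\mapsto z$ block carries the additive term $w_{12}(I-w_{11}\Delta)^{-1}$. Your recovery $I+P_{22}A=F(I-w_{11}\Delta)$ uses $(I-w_{11}\Delta)^{-1}\in\RHinf$ from Assumption~\ref{ass:stable}.
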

\begin{proof}
We first show necessity. Note that internal stability of the network system
implies stability of the transfer function from 
$d_{h1}$ to $(w,z)$, which is 
the condition in (\ref{eq:IGIDelta}). To show the necessity of (\ref{Vrank}),
suppose $V(s)$ is rank deficient for some $s=s_o\in\IC$ in the right half plane.
Then with $L(s):=w_{12}(s)w_{21}(s)A$, there exists nonzero $v_o\in\IC^n$ such that 
\[
M(s_o)v_o=0 \quad \text{and} \quad \hs L(s_o)v_o=0, 
\]
which implies that the closed-loop transfer function 
\[
y=\big((M-\hat\Delta L)^{-1}-I\big)d_a, \hs \hat\Delta:=\Delta(I-w_{11}\Delta)^{-1},
\]
is unstable with a pole at $s_o$. Hence, (\ref{Vrank}) is a necessary condition
for internal stability of $\Sigma(\Delta,H,A)$.

To prove sufficiency, suppose (\ref{eq:IGIDelta}) and (\ref{Vrank}) hold.
First note that $M(s)$ and $N(s)$ are coprime due to (\ref{Vrank}) because
\[
V = \begin{mat}{c} I-hA \\ N-w_{11}(I-hA) \end{mat} 
= \begin{mat}{c} M \\ N-w_{11}M \end{mat}.\]
With $G=NM^{-1}$, we see that (\ref{eq:IGIDelta}) implies 
\[
\begin{mat}{c} M \\ N \end{mat} (M-\Delta N)^{-1}\in\RHinf^{2n\times n},
\]
which in turn implies 
$F := (M-\Delta N)^{-1} \in \RHinf^{n \times n}$
due to coprimeness of $M$ and $N$. Lemma~\ref{lemma:IS2} in the Appendix then 
guarantees that the network system $\Sigma(\Delta,H,A)$ is internally stable. 
\end{proof}

We here provide a simple example that shows that  Assumption~\ref{ass:stable}
is important in Proposition~\ref{prop:stab}.
Consider 
\[ 
\begin{array}{l} n=2, \\ 
\delta_1 = \delta_2 = 3/4, \end{array} ~
h(s) = \frac{1}{3-s}, ~ 
A = \begin{bmatrix}
    1 & 1 \\
    1 & 1
\end{bmatrix}, ~
W = \begin{bmatrix}
0 & 1 \\
h & h
\end{bmatrix},
\]
for which 
$\mathcal{F}_u(\Delta_, W) = (1+\frac{3}{4}) h I_2$ 
models the perturbed system  
and $G(s) := \mathcal{F}_\ell(H(s), A) = \frac{1}{1-s}A$
is unstable. Evidently, Assumption~\ref{ass:stand} holds but 
Assumption~\ref{ass:stable} is violated. 
It is straightforward to verify that (\ref{eq:IGIDelta}) 
holds, and (\ref{Vrank}) is also satisfied for all $s$
in CRHP except at the pole $s=3$ of $h(s)$. 
However, (\ref{eq:P}) in Proposition~\ref{prop:IS1} is violated. 
Specifically, every entry in the bottom two rows of $P(s)$ 
in (\ref{eq:P}) has an unstable pole located at $3$. 
In other words, the overall network is unstable by Proposition~\ref{prop:IS1}.
Thus, conditions (\ref{eq:IGIDelta}) and (\ref{Vrank}) do not guarantee internal
stability of the network system unless Assumption~\ref{ass:stable} holds. 

\begin{remark}
\label{rmk:assump2} 
Condition \eqref{Vrank} is not restrictive in practice for the following reasons.
\begin{itemize}
\item
Condition \eqref{Vrank} holds if $w_{12}(s)w_{21}(s)$ is minimum phase. 
Hence, a natural outer weight function $w_a(s)$ in the additive perturbation 
case satisfies it. 
\item
Suppose $w_{12}(s)w_{21}(s)$ is a product of $h(s)$ and an outer function $w_3(s)$. 
Observe that condition \eqref{Vrank} holds even if $h(s)$ has nonminimum phase zeros. Hence, a natural outer function $w_m(s)$ (resp. $w_f(s)$) in the multiplicative (resp. feedback) perturbation case would satisfy the condition.
\item
In general, when condition \eqref{Vrank} is violated, arbitrarily small perturbations on $A$ and/or $w_{12}(s)w_{21}(s)$ can restore its satisfaction. 
\end{itemize}
Thus, internal stability of the network system is essentially equivalent to 
(\ref{eq:IGIDelta}) and we will focus on this condition in our further analysis. 
\end{remark}

\section{Upper and Lower Bounds on RIR}
\label{sec:Preliminaries}

This section is concerned with upper and lower bounds on the RIR as preliminary 
analysis. First note that it is very difficult to get any upper or lower bounds 
on the RIR based on Proposition~\ref{prop:IS1} since $P(s)$ is a function of 
$\Delta(s)$ as well as $H(s)$ and $A$. On the other hand, we may expect that some 
upper or lower bounds can be derived from Proposition~\ref{prop:stab} since
the network is recognized as a feedback system of $G(s)$ and 
$\Delta(s)$. This framework recasts the RIR computation as the search
for a minimum-norm perturbation $\Delta\in\RHinf^{n\times n}$ to stabilize
$G(s)$. 
The following MIMO version of the parity interlacing property 
(PIP)~\cite{Sagar1985} is important:  

\noindent
{\bf [MIMO PIP]} \; 
A general MIMO transfer function $G(s)\in\mathcal{R}_p$ is strongly stabilizable 
if and only if the number of unstable real poles of $G$ (counted 
according to their McMillan degree) between any pair of real blocking zeros 
of $G$ in CRHP $\cup \, \{ \infty \}$ is even. 

If the PIP condition is violated, then there is no stable perturbation 
$\Delta(s)$ stabilizes the network system, and therefore the RIR is infinite. 
Consequently, the analysis below will focus on the characterization of 
the RIR for $G(s)$ that passes the PIP test.

\subsection{Obvious Bounds from Proposition~\ref{prop:stab}}

We introduce the following two sets of $\Delta$, namely the sets of 
homogeneous uncertainties and full block (or unstructured) uncertainties 
that stabilize the network system $\Sigma(\Delta,H,A)$ in order to derive 
upper/lower bounds of $\rho_*$:
\begin{align*}
\mathbb{S}_h & :=  \{ \Delta=\delta I_n, \  \delta\in\RHinf  ~|~ P \in\RHinf^{2n\times 2n}  \}, \\
\mathbb{S}_f & :=  \{ \Delta\in\RHinf^{n\times n} ~|~ P \in\RHinf^{2n\times 2n} \},
\end{align*}
where $P$ is as defined in \eqref{eq:P}.
Since $\mathbb{S}_h\subseteq\mathbb{S}_d\subseteq\mathbb{S}_f$, 
an upper bound and a lower bound of $\rho_*$ are respectively given by 
\begin{align} 
\rho_* \leq & \rho_h := \displaystyle\inf_{\Delta\in\mathbb{S}_h} ~\|\Delta\|_{\infty} \label{hrir} , \\
\rho_* \geq & \rho_f := \displaystyle\inf_{\Delta\in\mathbb{S}_f} ~\|\Delta\|_{\infty} \label{urir} , 
\end{align} 
i.e., we have $\rho_f \le \rho_* \le \rho_h$. 

We also introduce three additional sets of $\Delta$ for the case of stable 
perturbed agents in order to have computable bounds  based on 
Proposition~\ref{prop:stab}. 
In particular, we define the sets of 
$\Delta$, which stabilizes $G(s)$, as follows:
\begin{align*}
\mathbb{S}_f^G  & :=  \{ \Delta \in \  \RHinf^{n\times n} ~|~ 
\eqref{eq:IGIDelta} \; \mbox{holds.} \} \supseteq \mathbb{S}_f. \\
\mathbb{S}_d^G & :=  \{ \Delta \in \mathbb{S}_f^G  ~|~ \Delta\in\mathbf{\Delta}_d \} 
\supseteq \mathbb{S}_d, \\ 
\mathbb{S}_h^G & :=  \{ \Delta \in \mathbb{S}_f^G  ~|~ \Delta=\delta I_n, \}  \supseteq \mathbb{S}_h  . 
\end{align*}
It is clear that 
$\rho_h^G  \ge \rho_*^G \ge \rho_f^G \ge \|G\|_{\infty}^{-1}$
holds, where 
\begin{align} 
& \rho_*^G := {\displaystyle\inf_{\Delta\in\mathbb{S}_d^G}} ~\|\Delta\|_{\infty} 
\leq \rho_*,   \\ 
&\rho_h^G := \displaystyle\inf_{\Delta\in\mathbb{S}_h^G} ~\|\Delta\|_{\infty} \label{hrir_G} 
\leq \rho_h,   \\ 
&\rho_f^G := \displaystyle\inf_{\Delta\in\mathbb{S}_f^G} ~\|\Delta\|_{\infty} \label{urir_G} 
\leq \rho_f. 
\end{align}
Note that $\rho_f^G \ge \|G\|_{\infty}^{-1}$ holds by the small-gain theorem.
We can see from Proposition~\ref{prop:stab} that equalities hold in the above 
inequalities, i.e., $\rho_* = \rho_*^G$, $\rho_h = \rho_h^G$, and $\rho_f = \rho_f^G$
when the uncertain agents are stable and there is no inherent unstable pole/zero
cancellation within $G(s)$.

\subsection{Comprehensive Results}

This subsection derives upper and lower bounds on the RIR by exploiting a system decomposition. 
To this end, 
we assume for simplicity the following to avoid naive situations:
\begin{ass} \label{ass:Adiag}
$A \in \IR^{n \times n}$ is diagonalizable.
\end{ass}

Let a spectral decomposition of $A$ be given by
\begin{equation} \label{eq:diagA}
A=T\Lambda T^{-1}, \hs \Lambda=\diag(\lambda_1,\ldots,\lambda_n).
\end{equation}
Then, the stability of $\Sigma(\Delta, H, A)$ is equivalent to that of $\Sigma(T^{-1}\Delta T, H, \Lambda) = \Sigma(\hat{\Delta}, \hat{G})$. By Proposition~\ref{prop:stab}, we may investigate the stability of the feedback interconnection of $\hat{\Delta} := T^{-1}\Delta T$ and $\hat{G} := \LFT(H, \Lambda)$. 
It is clear that $\hat{G}$ is a diagonal matrix of which the diagonal elements are represented by  
\begin{equation} \label{eq:gi}
g_i(s) := \LFT(W(s),\lambda_i), \hs i \in \II_n . 
\end{equation}
For the cases of the additive, the multiplicative, and the feedback type perturbations with 
an eigenvalue $\lambda$, they are  respectively given by 
setting $\lambda:=\lambda_i$ in
\[
g_a(s) = \frac{w_a \lambda}{1 - \lambda h}, \hs 
g_m(s) = \frac{w_m \lambda h}{1 - \lambda h}, \hs 
g_f(s) = \frac{w_f}{1 - \lambda h}. 
\] 

A summary of the results derived in this subsection is given as follows. 
\begin{theorem} \label{prop:ULsummary} 
Suppose that the uncertain network system $\Sigma(\Delta, H, A)$ satisfies 
Assumptions~\ref{ass:stand} and \ref{ass:Adiag}.
Let $\IU\subseteq\II_n$ denote the set of indices $i$ such that $g_i(s)$ is unstable. 
Then, we have 
\begin{equation} \label{ULsummary}
\begin{array}{cccccccccc}
&& \rho_f &\le& \rho_* &\le& \rho_h &&  \vspace{2mm} \\
&& \rotatebox{90}{$\le$}^s && \rotatebox{90}{$\le$}^s && \rotatebox{90}{$\le$}^s && \vspace{2mm} \\
\|G\|_{\infty}^{-1}  &\le& \rho_f^G &\le& \rho_*^G &\le& \rho_h^G &=& 
\rhoU \vspace{2mm} \\
&& \rotatebox{90}{$\le$}_n && \rotatebox{90}{$\le$}_{dn} && \rotatebox{90}{$\le$}&&  \\
&& \rhoL_p && \rhoL_p && \rhoL_+  && 
\end{array}
\end{equation}
where
\begin{align}
& \rhoU := \!\! \inf_{\delta \in \RHinf} \{\|\delta\|_{\infty}:  
\frac{\begin{mat}{cc} 1 & g_i \end{mat}}{1-\delta g_i}\in\RHinf, ~
\forall i \in \II_n \},  \label{rhou} \\
& \rhoL_p := \min_{i\in\II_n}~ 1/\|g_i\|_{\infty}, \label{rhol} \\
& \rhoL_+:=\max_{i\in\IU} 1/\|g_i\|_{\infty},   \label{rho+}
\end{align}
and the symbols $\rotatebox{90}{$\le$}^s$, $\rotatebox{90}{$\le$}_n$, and 
$\rotatebox{90}{$\le$}_{dn}$ mean the following:
\begin{itemize} 
\item
$\rotatebox{90}{$\le$}^s$ means that the inequality holds in general and that
it becomes an equality if every possible perturbation $\Delta$ satisfies Assumption~\ref{ass:stable}  
and~\eqref{Vrank} holds.
\item
$\rotatebox{90}{$\le$}_n$ means that this inequality holds when $A$ is normal,
in which case we have $\rhoL_p = \|G\|_{\infty}^{-1}$.  
\item
$\rotatebox{90}{$\le$}_{dn}$ means that this inequality holds when $A$ is diagonally 
normalizable, i.e., $A\in\Rset_{DN}^{n\times n}$ with 
\begin{eqnarray}
\hspace*{-8mm}
\Rset_{DN}^{n \times n} \!\!\! & := \!\!\! & \{ A \in \Rset^{n \times n} \; | \: 
~\exists \; \mbox{diagonal matrix } D > 0  
\nonumber \\ 
&& \mbox{such that} \; 
DAD^{-1} \; \mbox{is normal} \; \} 
\end{eqnarray} 
in which case we have $\rhoL_p = \|DGD^{-1}\|_{\infty}^{-1} $. 
\end{itemize}
\end{theorem}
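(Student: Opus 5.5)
The plan is to verify each relation in \eqref{ULsummary} separately, using only set inclusions, Proposition~\ref{prop:stab}, the small gain theorem, and the modal decomposition \eqref{eq:diagA}.

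\emph{Top row and vertical relations marked ${}^s$.} The top row $\rho_f\le\rho_*\le\rho_h$ already follows from $\mathbb{S}_h\subseteq\mathbb{S}_d\subseteq\mathbb{S}_f$. For the vertical relations, I will use the necessity part of Proposition~\ref{prop:stab}, which holds with no assumptions: internal stability implies \eqref{eq:IGIDelta}. Hence $\mathbb{S}_\bullet\subseteq\mathbb{S}_\bullet^G$ for $\bullet\in\{h,d,f\}$, and so $\rho_\bullet^G\le\rho_\bullet$. If every admissible $\Delta$ satisfies Assumption~\ref{ass:stable} and \eqref{Vrank} holds, then the sufficiency part of Proposition~\ref{prop:stab} gives $\mathbb{S}_\bullet=\mathbb{S}_\bullet^G$, which yields the equalities. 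The middle row $\rho_f^G\le\rho_*^G\le\rho_h^G$ again follows from the inclusions $\mathbb{S}_h^G\subseteq\mathbb{S}_d^G\subseteq\mathbb{S}_f^G$. For $\|G\|_\infty^{-1}\le\rho_f^G$, I will use the small gain/Nyquist argument given in the next-to-last paragraph.

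\emph{The equality $\rho_h^G=\rhoU$.} I first write $G=NM^{-1}=w_{11}I+w_{12}w_{21}A(I-hA)^{-1}=T\hat G T^{-1}$ with $\hat G=\diag(g_1,\ldots,g_n)$. For $\Delta=\delta I_n$ this gives
\[
\begin{bmatrix} I\\ G\end{bmatrix}(I-\delta G)^{-1}=\begin{bmatrix} T&0\\0&T\end{bmatrix}\begin{bmatrix} I\\ \hat G\end{bmatrix}(I-\delta\hat G)^{-1}T^{-1}.
\]
Since $T$ is constant and invertible, \eqref{eq:IGIDelta} holds if and only if $[1\;\; g_i]/(1-\delta g_i)\in\RHinf$ for every $i\in\II_n$. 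The two infima in \eqref{hrir_G} and \eqref{rhou} are therefore taken over the same set, and $\rho_h^G=\rhoU$.

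\emph{Scalar lemma and $\rhoL_+\le\rhoU$.} The core step is the following scalar claim: if $g\in\RLinf$ is unstable and a stable $\delta$ makes $[1\;\;g]/(1-\delta g)$ stable, then $\|\delta\|_\infty\ge 1/\|g\|_\infty$. To prove it, suppose instead that $\|\delta g\|_\infty<1$. Then $1-\tau\delta g$ does not vanish on the extended imaginary axis for any $\tau\in[0,1]$. By the argument principle, the difference between the numbers of CRHP zeros and CRHP poles of $1-\tau\delta g$ stays equal to its value at $\tau=0$, which is zero.

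Next, write $q:=g/(1-\delta g)\in\RHinf$ and $r:=1/(1-\delta g)\in\RHinf$. Then $g=q/r$ with $r=1+\delta q$. Every unstable pole of $g$ must therefore be a CRHP zero of $r$, that is, a CRHP pole of $1-\delta g$. Because $r$ is stable, $1-\delta g$ has no CRHP zeros, so by the winding count it also has no CRHP poles. This contradicts the instability of $g$. The main obstacle is the bookkeeping in this step: possible cancellations between zeros of $\delta$ and poles of $g$, and poles of $g_i$ near the imaginary axis. I would handle both through the coprime factorization $g=q/r$ just described.

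\emph{Consequences of the scalar lemma.} Applying the lemma to each $i\in\IU$ with the common $\delta$ gives $\|\delta\|_\infty\ge\max_{i\in\IU}1/\|g_i\|_\infty=\rhoL_+$. The same Nyquist argument in matrix form, using $\det(I-\tau\Delta G)$, gives $\|\Delta\|_\infty\ge\|G\|_\infty^{-1}$ for any $\Delta\in\mathbb{S}_f^G$, because $G$ is unstable under Assumption~\ref{ass:stand}.

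\emph{Normal and diagonally normalizable cases.} If $A$ is normal, $T$ can be chosen unitary. Then $\|G\|_\infty=\max_i\|g_i\|_\infty$, so $\rhoL_p=\|G\|_\infty^{-1}\le\rho_f^G$ by the previous paragraph. If $DAD^{-1}$ is normal for some diagonal $D>0$, note that $DGD^{-1}=\LFT(H,DAD^{-1})$ has the same modal functions $g_i$, since the eigenvalues are unchanged. Also, $D$ commutes with every diagonal $\Delta$, so \eqref{eq:IGIDelta} for the pair $(\Delta,G)$ is equivalent to the same condition for $(\Delta,DGD^{-1})$. Applying the small gain bound to $DGD^{-1}$, together with the normal case, gives $\rho_*^G\ge\|DGD^{-1}\|_\infty^{-1}=\rhoL_p$.
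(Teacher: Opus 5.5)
Your proposal is correct and follows essentially the same route as the paper: set inclusions, Proposition~\ref{prop:stab}, the modal decomposition $G=T\hat GT^{-1}$, the small gain argument, and a diagonal scaling $D$ that commutes with diagonal $\Delta$. It adds an actual proof of the ``small gain for instability'' step, via the winding number and the factorization $g=q/r$ with $r=1+\delta q$, which the paper only asserts. One caveat, shared by the paper: instability of $G$ (equivalently $\IU\neq\emptyset$) does not follow from Assumption~\ref{ass:stand}, since zeros of $w_{12}w_{21}$ can cancel the unstable modes of $(I-hA)^{-1}$, so it must be taken as an implicit hypothesis for $\|G\|_\infty^{-1}\le\rho_f^G$ and the $\rhoL_p$ bounds.
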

\begin{proof}
The four inequalities $\leq$ in the middle follow from the inclusion 
relationship between the full block, diagonal, and homogeneous uncertainty sets.
The inequalities $\rotatebox{90}{$\le$}^s$ are simple consequences of 
Proposition~\ref{prop:stab}. $\|G\|_{\infty}^{-1}  \le \rho_f^G$ is from the small gain 
argument.  $\rho_h^G = \rhoU$ is also trivial, since $\rhoU$ gives the smallest norm of $\delta(s)$ 
which simultaneously marginally stabilizes all $g_i(s)$.  
$\rhoL_+ \leq \rho_h^G$ can be shown by contradiction.  Suppose $\rhoL_+ >  \rho_h^G$.  
Then, it is clear that there exists at least one unstable $g_i$ which cannot be stabilized.  
This contradicts the assumption. 
The inequality between $\rho_f^G$ and $\rhoL_p$ is from Inoue {\em et al}. 
It is also a special case of the inequality between $\rho_*^G$ and $\rhoL_p$ that 
corresponds to the case where $A$ is diagonally normalizable, i.e.,
we have $UDAD^{-1}U^* = \diag \{ \lambda_i \}$.  
Then the stability of the feedback loop system consisting of $\Delta(s)$ and $G(s)$ is equivalent to that 
consisting of $UD \Delta(s) D^{-1}U^*$ and $\diag \{ g_i(s) \}$ for some diagonal $D$ and unitary $U$. 
Since $D \Delta(s) D^{-1} = \Delta(s)$, we have 
\[
 \| UD \Delta D^{-1}U^* \|_\infty = \| U \Delta U^*| \|_\infty =  \|\Delta| \|_\infty .
\]
By the small gain argument we can see that the feedback loop system cannot stabilized 
by any such $\Delta(s)$ if 
\[
\|\Delta \|_\infty < \rhoL_p = \min 1/\| g_i\|_\infty 
= (\max \| g_i\|_\infty)^{-1} 
\]
holds.  This implies that  $\rho_*^G \ge \rhoL_p$. 
\end{proof}

\begin{remark} 
We have two remarks on Theorem~\ref{prop:ULsummary}. 
\begin{itemize}
\item
Suppose Assumptions~\ref{ass:stand}--\ref{ass:Adiag} hold. Then we have 
\[
\rhoL_+ \le \rho_h^G = \rho_h = \rhoU  . 
\]
This is a nice result, but computation of $\rhoU$ is very difficult in general. 
The reason is that the corresponding problem is a minimum norm  simultaneous 
strong stabilization problem. 
\item
$\rhoL_p$ possibly improves the lower bound $\|G\|_{\infty}^{-1}$ 
if $A$ is diagonally normalizable but not normal.
\end{itemize}
\end{remark}

Now compare $\rhoL_p$ (a lower bound for the case of heterogeneous perturbation) and 
$\rhoL_+$ (a lower bound for the case of homogeneous perturbation). There are two differences: $\rhoL_p$ is {\em min} among all $i \in\II_n$, but 
$\rhoL_+$ is {\em max} among not all but only $i \in\IU$. 
This clearly shows that $\rhoL_p \leq \rhoL_+$, i.e., 
 $\rhoL_+$ gives a better lower bound than  $\rhoL_p$ 
if $\rho_* = \rho_h$ holds. 
This property is confirmed by a numerical example related to 
a class of biological systems \cite{hara:19, hara2022instability}. 

Consider a set of uncertain $n$ agents $(1+\delta_i(s))h(s)$, $i\in\II_n$, 
with cyclic connections. This is a special case of (\ref{sys}) with 
\begin{equation} \label{cyc}
W(s) = \begin{mat}{cc} 0 & h(s) \\ 1 & h(s) \end{mat}, \hs
A=\begin{mat}{cc} o\t & -\mu \\ -\mu I & o \end{mat} , 
\end{equation}
where $\mu\in\IR$ is a positive scalar and $o\in\IR^{n-1}$ is a zero vector. 
The nominal dynamics of the agent are represented by 
$h(s) = c\beta/\{ (s+a)(s+b) \}$.  
The parameters are set as $a = 5.1986,$, $b = 0.4612$, $c = 30$ $\beta = 24$, and $\mu = 0.006247$, 
which are from a practical example of gene regulatory networks 
investigated in \cite{NetworkRIR}.  
Note that $A$ is a cyclic matrix with all non-positive entries. 
Hence, $A \in \Rset_{DN}$ \cite{hara:19} and 
the eigenvalues of A are located at $\lambda_k = \mu e^{j(k/n)\pi}$ for $k=1,3,\ldots,2n-1$.  
Since $h(s)$ is stable, by Theorem~\ref{prop:ULsummary}
or the inequalities in (\ref{ULsummary}), upper and lower bounds on the
robust instability radius $\rho_* = \rho_*^G$ are given by $\rhoU$ and $\rhoL_p$, respectively. 

Numerical results of robust instability radius for odd $n$ are shown in Fig.~\ref{fig:rir}. 
The lower bounds $\rhoL_p$ plotted by the blue curve are always far below 
the red plots indicating $\rhoL_+$ except $n=3$ or $n=5$.    
The reason is as follows. 
$\rhoL_+$ are based on the eigenvalue $\lambda_1$ that is the closest to the real axis, 
while $\rhoL_p$ is based on the eigenvalue $\lambda_k$ closest to 
the inverse Nyquist $1/h(j\omega)$.  They are coincident only when $n=3$, 
and the gaps are big for other cases in general.  
This example clearly shows that the small gain argument does not work so well 
for the exact robust instability analysis in general. 

It should be emphasized that for a class of network systems 
with cyclic graph having stable agents, $\rhoL_+$ is exactly equal to the 
RIR \cite{NetworkRIR}, where the logic is based on the monotone and the 
convexity properties of the inverse Nyquist plots $1/h(j\omega)$.

\begin{figure}[h]
\hspace{8mm}
\epsfig{file=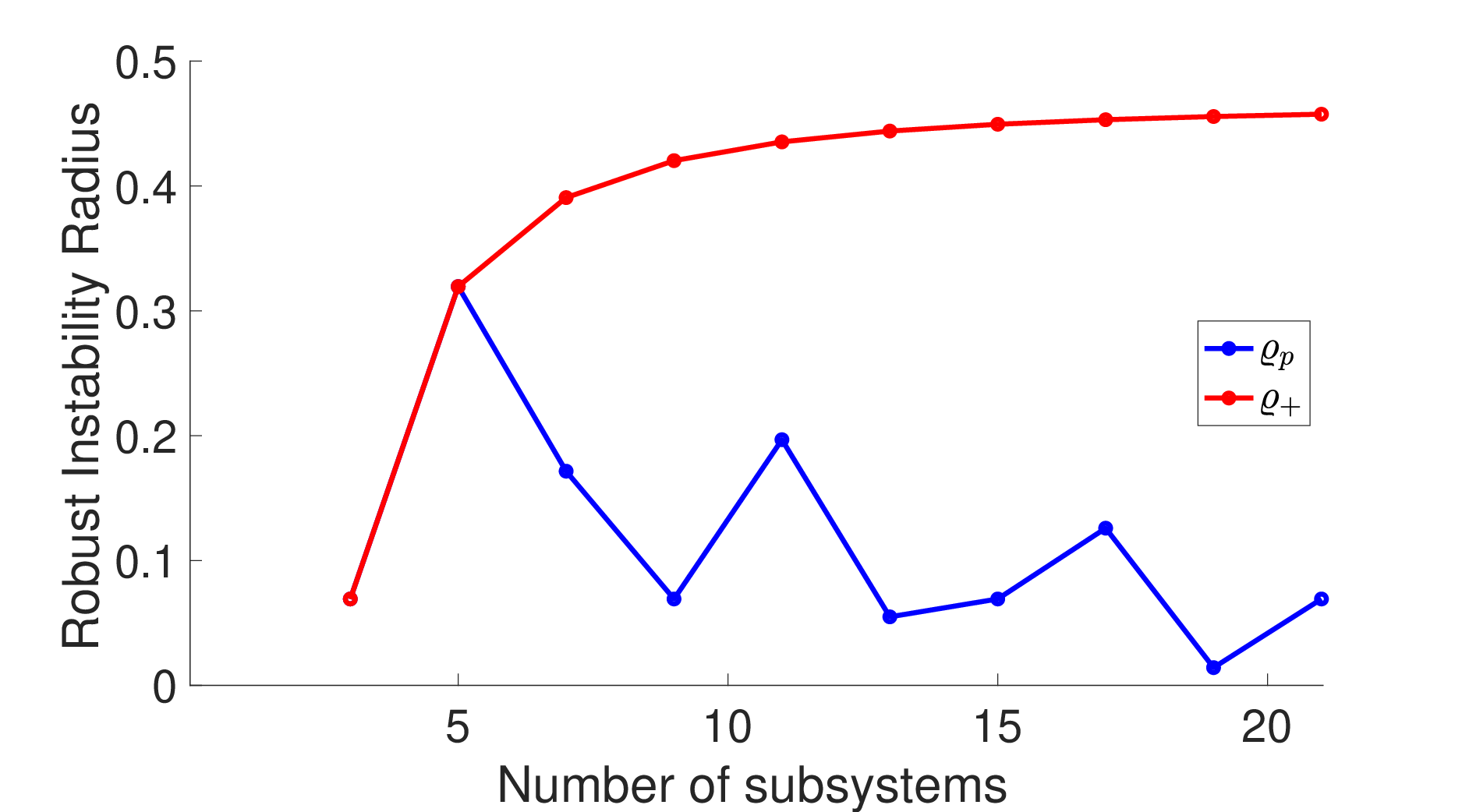,width=70mm}
\caption{Robust instability radius. Blue and red plots are $\rhoL_p$ and $\rhoL_+$ 
defined by (\ref{rhol}) and by (\ref{rho+}), respectively. }
\label{fig:rir}
\end{figure}
\section{Rank Deficient Case}
\label{sec:RankDeficient}

This section investigates the robust instability radius for the case where 
the interaction matrix $A$ is rank deficient, i.e., $\rank(A) =: k < n$. 

\subsection{General Analysis}

We have the following lemma regarding a necessary condition for stability 
of the network. 

\begin{lemma}
\label{lem:nec-rankdef}
Suppose $w_{11},\Delta\in\RHinf$, 
$A$ is rank-deficient with semisimple eigenvalue(s) at zero, and 
$\Sigma(\Delta,H,A)$ is internally stable. 
Then $H\in\RHinf$. 
\end{lemma}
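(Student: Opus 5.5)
The plan is to read off each entry of $W$ from suitable closed-loop transfer functions of $\Sigma(\Delta,H,A)$, using a pair of left and right null vectors of $A$. Since $A$ is rank deficient, $0$ is an eigenvalue. Semisimplicity of this eigenvalue lets us choose $v\in\IC^n$ with $Av=0$ and $q\in\IC^{1\times n}$ with $qA=0$ such that $qv\neq 0$. (If $qv=0$ for every such pair, a Jordan chain at $0$ would exist.) By Definition~\ref{def:IS}, internal stability means that every map from $(d_{h1},d_{h2},d_\delta,d_a)$ to $(z,y,w,u)$ lies in $\RHinf$. Since $w_{11}\in\RHinf$ is assumed, it remains to show $h,w_{12},w_{21}\in\RHinf$.

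\emph{Step 1 ($h$ and $w_{12}$).} Inject $d_{h2}=v\,r$ with all other inputs zero, where $r$ is a scalar test signal. Then $u=Ay+vr$, so $qu=(qv)\,r$ exactly: the feedback through $A$ is annihilated by $q$. The loop equations $y=w_{21}w+hu$ and $z=w_{11}w+w_{12}u$, applied componentwise (since $H=W\otimes I_n$), give
\[
h\,u=y-w_{21}w \quad\text{and}\quad w_{12}\,u=z-w_{11}w .
\]
This does not settle $h\,u$ directly, because $w_{21}w$ involves the possibly unstable $w_{21}$. So I first handle $w_{12}$. The right-hand side $z-w_{11}w$ is stable, because $z,w$ are closed-loop responses and $w_{11}\in\RHinf$. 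Multiplying by $q$ gives $w_{12}(qv)\,r\in\RHinf$ for every stable $r$, hence $w_{12}\in\RHinf$. For $h$, I would instead use the input $d_a=v\,r$, which gives $u=Ay+Avr+\dots=Ay$, or combine $d_a$ and $d_{h2}$ so as to isolate $y$. Projecting $y=K u$ with $K:=hI+w_{12}w_{21}\hat\Delta$ and $\hat\Delta=\Delta(I-w_{11}\Delta)^{-1}$ onto $q$ is the natural route. $I-w_{11}\Delta$ is invertible with $\hat\Delta$ stable because the map from $d_\delta$ to $w$ is stable.

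\emph{Step 2 ($w_{21}$).} Inject $d_{h1}=v\,r$. Then $w=vr+\Delta z$ and $y=w_{21}w+hu$, with $y,u,w,z$ stable and, after Step 1, $h$ stable. Hence $w_{21}w\in\RHinf$, and therefore
\[
w_{21}vr=w_{21}w-w_{21}\Delta z .
\]
To conclude, I need $w_{21}\Delta z$ stable. I would expand $z=w_{11}w+w_{12}u$ and $u=A(y)$, and use $qA=0$ together with stability of $w_{21}w$ and of $w_{12},w_{11},h$ to show that every unstable contribution of $w_{21}$ cancels after projecting with $q$.

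\emph{Main obstacle.} I expect the isolation of $w_{21}$ (and, to a lesser extent, of $h$) to be the main obstacle. The uncertainty $\Delta$ is diagonal but does not commute with $q$ or $v$, so projections do not decouple the $\Delta$-loop. The fallback is the spectral form from \eqref{eq:diagA}, restricted to the zero block: write $A=T\,\diag(0,A_2)T^{-1}$. The input channel corresponding to the zero block is then open-loop through $A$, so the map from $d_{h2}$ in that channel to $(z,y)$ equals the agent-level map $(w_{12}+w_{11}\hat\Delta w_{12},\,h+w_{21}\hat\Delta w_{12})$. Applying this with $\Delta$ perturbed is not allowed, since $\Delta$ is fixed. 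So I would instead exploit the $d_\delta$ inputs, which yield $w_{21}$ multiplied by the stable, invertible factor $(I-w_{11}\Delta)^{-1}$, to peel off $w_{21}$.
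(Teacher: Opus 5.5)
Your choice of $v$ and $q$ with $qv\neq0$ is correct; this is where semisimplicity enters. Your argument for $w_{12}$ is also correct and coincides with the paper's first step: in the coordinates of $T$, the paper's row block $\gamma_o$ is exactly your identity $qz-w_{11}\,qw=w_{12}\,qu=w_{12}(qv)r$.

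\textbf{The gap.} You only sketch $h$ and $w_{21}$, and the sketched routes do not close.
The input $d_a=vr$ is inert: $u=A(y+vr)+d_{h2}=Ay+d_{h2}$, so its closed-loop response is zero.
Projecting $y=w_{21}w+hu$ under $d_{h2}=vr$ gives $h(qv)r=qy-w_{21}\,qw$. This still contains the possibly unstable $w_{21}$, yet your Step 2 presupposes $h\in\RHinf$, so the two steps are circular.
In Step 2 you reduce the claim to stability of $w_{21}\Delta z$. That is no easier than the claim itself, precisely because $q$ does not commute with $\Delta$.
Stability of $\hat\Delta=\Delta(I-w_{11}\Delta)^{-1}$ is not available either. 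Assumption~\ref{ass:stable} is not a hypothesis here; indeed it contains the conclusion $H\in\RHinf$. The map $d_\delta\to w$ is not $\hat\Delta$, because it also closes through $A$. And internal stability does not force $(I-w_{11}\Delta)^{-1}\in\RHinf$: for $A=\diag(a,0)$, the first agent's loop only requires $(1-w_{11}\delta_1)(1-ah)-a\,w_{12}w_{21}\delta_1$ to have no CRHP zeros, not $1-w_{11}\delta_1$. The same objection rules out your fallback of peeling off $w_{21}$ through $(I-w_{11}\Delta)^{-1}$.

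\textbf{The missing idea.} Premultiply by the stable factor $I-w_{11}\Delta$ instead of inverting it. Choose stable test inputs that force the closed-loop signals into a prescribed form, then invoke uniqueness of the closed-loop solution (well-posedness). Write $T_{yd}$ for the closed-loop transfer function from an input $d$ to $y$.

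For $w_{21}$, take $d_{h1}=(I-w_{11}\Delta)v\,r$ and all other inputs zero. Then $w=vr$, $u=0$, $z=w_{11}vr$, $y=w_{21}vr$ satisfy every loop equation because $Av=0$. Hence $w_{21}v=T_{yd_{h1}}(I-w_{11}\Delta)v\in\RHinf$.

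For $h$, once $w_{12}\in\RHinf$, take $d_{h2}=vr$ and $d_{h1}=-w_{12}\Delta v\,r$. Then $w=0$, $u=vr$, $z=w_{12}vr$, $y=hvr$. Hence $hv=T_{yd_{h2}}v-T_{yd_{h1}}w_{12}\Delta v\in\RHinf$.

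These are exactly the paper's arguments in transformed coordinates. The first input is the paper's $\alpha=(I-\hat\nabla_1H_{11})\gamma$ with $\gamma=\col(0,\gamma_o,0)$. The second corresponds to the stability of $H_{21}\Psi\hat\nabla_1H_{12}$ in the $(2,2)$ block of $\hat P$. Neither step needs $q$, and the step for $h$ needs only $w_{12}$, not $w_{21}$.
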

\begin{proof}
See Appendix~\ref{prof:nec-rankdef}. 
\end{proof}

The above result indicates that, when the network connections are rank deficient,
stability of $H(s)$ is a necessity under a mild condition. For the rest of this 
section, we consider the case where the agent dynamics are 
modeled by a common nominal transfer function $h(s)$ with 
heterogeneous multiplicative uncertainties as described in 
Section~\ref{subsec:UncertainDNS}. In this case, $W_m$ has $w_{11}=0$, 
and in light of Lemma~\ref{lem:nec-rankdef}, we require 
$h, \ w_m\in\RHinf$ without loss of generality in our analysis 
since otherwise the RIR is infinite. 
We let $w_m$ be an outer function. 

Under Assumption~\ref{ass:Adiag}, let $A$ be decomposed as 
$A=T\Lambda T^{-1}, \; \Lambda = \mathrm{diag}(\Lambda_k,0_{(n-k)})$, 
where $T$ is a nonsingular matrix and 
$\Lambda_k = \mathrm{diag}(\lambda_1,\lambda_2,\cdots,\lambda_k), 
\; \lambda_i\not=0, \: i\in\II_k.$
The decomposition can also be seen as $A=L\Lambda_k R$, 
where $L\in\mathbb{C}^{n\times k}$ and $R\in\mathbb{C}^{k\times n}$ satisfy $RL=I_k$. 
Then we have 
\[
\mathcal{F}_\ell(H,A)=T\mathcal{F}_\ell(H,\Lambda) T^{-1}=LG_kR 
\]
\[
G_k:= \diag(g_1,\cdots, g_k), \hs
 g_i = \frac{\lambda_i w_m h}{1-\lambda_i h}, \hs i\in\II_k.
\]
Hence, the network system essentially reduces to the feedback system of
$\Delta_k:=R\Delta L$ and $G_k$. With this insight, we have the following result. 

\begin{lemma} \label{lem:stbGk}
Consider the network system $\Sigma(\Delta,H,A)$ in 
Fig.~\ref{fig:LFT_UncertainNetwork} with multiplicative perturbations, i.e., $W=W_m$.  
Suppose $w_m(s)$ is an outer function, Assumptions~\ref{ass:stand}--\ref{ass:Adiag}
hold, and $A$ is rank deficient.
Define $\Delta_k(s)$ and $G_k(s)$ as above through a spectral 
decomposition of $A$. Then, the network system is internally stable if and only if
\begin{align}
\begin{bmatrix} I_k \\ G_k \end{bmatrix}(I_k-\Delta_k G_k)^{-1}
\in\RHinf^{2k\times k}.
\label{eq:lowrank_S3}
\end{align}
\end{lemma}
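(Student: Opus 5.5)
The plan is to reduce the claim to Proposition~\ref{prop:stab} and then pass from the $n$-dimensional loop $(\Delta,G)$ to the $k$-dimensional loop $(\Delta_k,G_k)$ using the factorization $G=LG_kR$ with $RL=I_k$. Throughout, $W=W_m$, so $w_{11}=0$ and $w_{12}w_{21}=w_mh$.

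\textbf{Applying Proposition~\ref{prop:stab}.} First I check its hypotheses.
\begin{itemize}
\item Since $w_{11}=0$, $\mathcal{F}_u(\delta_i,W_m)=(1+w_m\delta_i)h$. This is stable for every stable $\delta_i$, because $h,w_m\in\RHinf$; the latter is available from Assumption~\ref{ass:stable} and Lemma~\ref{lem:nec-rankdef}.
\item For the same reason $H=W_m\otimes I_n\in\RHinf$, so Assumption~\ref{ass:stable} holds.
\end{itemize}
Next I verify the rank condition \eqref{Vrank}. Here $V(s)=\begin{bmatrix} I-hA \\ w_mhA\end{bmatrix}$. Suppose $V(s_o)v=0$ with $\Re(s_o)\ge 0$ and $v\neq 0$.
\begin{itemize}
\item Because $w_m$ is outer, $w_m(s_o)\ne 0$. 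So $w_mhAv=0$ forces $h(s_o)Av=0$.
\item Then $(I-hA)v=v-h(s_o)Av=v$, so the first block gives $v=0$, a contradiction.
\end{itemize}
I will need to handle $s_o$ at infinity or zeros of $w_m$ on the imaginary axis. "Outer" should be read as having no zeros in the ORHP, and possibly on the axis, so I flag this as a point needing care. Proposition~\ref{prop:stab} then says internal stability is equivalent to \eqref{eq:IGIDelta} with $G=LG_kR$.

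\textbf{Equivalence of the two loop conditions.} I use push-through identities.
\begin{itemize}
\item $\Delta G(I-\Delta G)^{-1}=\Delta L G_k(I_k-R\Delta LG_k)^{-1}R$. Hence $(I-\Delta G)^{-1}=I+\Delta LG_k(I_k-\Delta_kG_k)^{-1}R$.
\item $G(I-\Delta G)^{-1}=LG_k(I_k-\Delta_kG_k)^{-1}R$.
\end{itemize}
For the direction \eqref{eq:lowrank_S3} $\Rightarrow$ \eqref{eq:IGIDelta}, note that $G_k(I-\Delta_kG_k)^{-1}\in\RHinf$. The two identities then show that $(I-\Delta G)^{-1}$ and $G(I-\Delta G)^{-1}$ are stable, since $\Delta,L,R$ are stable or constant.

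For the converse, stability of $G(I-\Delta G)^{-1}$ gives $G_k(I_k-\Delta_kG_k)^{-1}=R\,[G(I-\Delta G)^{-1}]\,L\in\RHinf$, using $RL=I_k$. Then $(I_k-\Delta_kG_k)^{-1}=I_k+\Delta_kG_k(I_k-\Delta_kG_k)^{-1}$ is also stable, because $\Delta_k=R\Delta L$ is stable.

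\textbf{Remaining care.} I expect the main obstacle to be the rank condition rather than the algebra. Specifically:
\begin{itemize}
\item making precise how "outer" rules out zeros of $w_m$ in the closed RHP, or else arguing directly at boundary points;
\item justifying that $L$ and $R$ may be complex-valued. The conditions are still equivalent, since membership in $\RHinf$ is just a pole-location statement and conjugate eigenvalue pairs can be treated by a real block-diagonal form if real rationality is required.
\end{itemize}
I would also confirm that the well-posedness of $I-\Delta G$ and of $I_k-\Delta_kG_k$ coincide. This follows from $\det(I-\Delta LG_kR)=\det(I_k-R\Delta LG_k)$.
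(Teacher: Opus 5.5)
Your argument is correct, but it takes a different route from the paper's proof.

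The paper never applies Proposition~\ref{prop:stab} to the full network. Instead it:
\begin{itemize}
\item uses Lemma~\ref{lemma:IS2} to reduce internal stability of $\Sigma(\Delta,H,A)$ to $\det(I_n-hA-w_mh\Delta A)\neq0$ on the CRHP;
\item rewrites this via $A=L\Lambda_kR$ and Sylvester's determinant identity as $\det(I_k-h\Lambda_k-w_mh\Delta_k\Lambda_k)\neq0$;
\item recognizes the latter as the same condition for the reduced network $\Sigma(\Delta_k,H_k,\Lambda_k)$;
\item only then invokes Proposition~\ref{prop:stab}, with Remark~\ref{rmk:assump2} for the rank condition, on that $k$-dimensional network.
\end{itemize}
You instead apply Proposition~\ref{prop:stab} once to the original real system, verify \eqref{Vrank} by hand, and pass from \eqref{eq:IGIDelta} to \eqref{eq:lowrank_S3} by push-through identities and $RL=I_k$.

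\textbf{What each route buys.} Your route only ever applies the internal-stability machinery (Lemma~\ref{lemma:IS2}, Proposition~\ref{prop:stab}) to the real-rational original network. The paper tacitly applies it to a reduced network whose $\Lambda_k$ and $\Delta_k$ may be complex. Your route also gives explicit formulas linking the $n$- and $k$-dimensional closed-loop maps. In exchange, the paper obtains the structurally stronger intermediate fact that the whole network is internally stable exactly when the reduced network is, via a one-line determinant computation.

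\textbf{Minor points.}
\begin{itemize}
\item The boundary issue you flag (zeros of $w_m$ on the imaginary axis) is not treated in the paper either. The paper reads ``outer'' as having no zeros in the closed right half-plane, as in Remark~\ref{rmk:assump2}. Under that reading your rank check is complete, and $s=\infty$ is harmless because $h$ is strictly proper.
\item You do not need Lemma~\ref{lem:nec-rankdef}, since $h,w_m\in\RHinf$ is already contained in Assumption~\ref{ass:stable}.
\end{itemize}
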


\begin{proof}
Applying  Lemma~\ref{lemma:IS2} in the Appendix, $\Sigma(\Delta,H,A)$ is stable if and only if 
the corresponding $F(s)$ defined by \eqref{eq:F} is stable. 
In this case, $F\in\RHinf^{n\times n}$ holds if and only if the following 
determinant is nonzero for all $s\in$ CRHP: 
\begin{align*} 
\det\big(F(s)^{-1}\big)
=  & \det(I_n-h(s)A-w_m(s)h(s)\Delta(s) A) \\
= & \det(I_n-h(s)(I+w_m(s)\Delta(s))L\Lambda_k R)\\
= & \det(I_k- h(s)\Lambda_k - w_m(s)h(s)\Delta_k(s)\Lambda_k).
\end{align*}
Noting that the first and third determinants have analogous expressions, 
consider the reduced network system $\Sigma(\Delta_k, H_k, \Lambda_k)$
where $H_k:=W_m\otimes I_k$. 
It can be verified that the reduced system 
satisfies Assumptions~\ref{ass:stand} and \ref{ass:stable}. \linebreak
Then Lemma~\ref{lemma:IS2}
and the development above yield that $\Sigma(\Delta,H,A)$ is internally stable 
if and only if $\Sigma(\Delta_k, H_k, \Lambda_k)$ is  internally stable.
Since $w_m$ is an outer function, by Remark~\ref{rmk:assump2} and 
Proposition~\ref{prop:stab}, %
$\Sigma(\Delta_k, H_k, \Lambda_k)$ is 
internally stable if and only if \eqref{eq:lowrank_S3} holds. 
\end{proof}

\subsection{Two-Step Optimization Procedure}

Now consider the following problems concerning the feedback interconnection of $G_k$ and $\Delta_k$:
\begin{align}
&  \inf_{\Delta} \|\Delta\|_{\infty}
\mbox{ \ \ s.t. \ }  R\Delta L\in \mathbb{S}_f^{G_k}, \  \Delta\in\mathbf{\Delta}_d
\label{eqiv:rhod_G} \\
&  \inf_{\Delta_k} \|\Delta_k\|_{\infty} 
\mbox{ \ s.t. \ }  \Delta_k \in \mathbb{S}_f^{G_k} 
\label{2steps:1} \\
&  \inf_{\Delta} \|\Delta\|_{\infty}
\mbox{ \ \ s.t. \ } R\Delta L = \Delta_k^*, \ 
\Delta\in\mathbf{\Delta}_d 
\label{2steps:2}
\end{align}
where $\Delta_k^*$ is an optimizer of \eqref{2steps:1}, and the set
$\mathbb{S}_f^{G_k}$ is defined similarly to $\mathbb{S}_f^G$. 

Note that problems~\eqref{2steps:1} and~\eqref{2steps:2} shall be bundled together, as 
the constraint in~\eqref{2steps:2} depends on the solution of~\eqref{2steps:1}. 
Also note that in problem~\eqref{2steps:1}, we do not require any structure for 
$\Delta_k$. 

The following proposition shows that the RIR of the rank deficient network 
$\Sigma(\Delta,H,A)$ can be found by solving~\eqref{eqiv:rhod_G}. 
\begin{prop}
\label{prop:red}
Under the setup of Lemma~\ref{lem:stbGk}, the infimum of 
\eqref{eqiv:rhod_G} is equal to $\rho_*$.
\end{prop}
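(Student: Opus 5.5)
The plan is to show that the feasible set of \eqref{eqiv:rhod_G} coincides with the set $\mathbb{S}_d$ in \eqref{eq:SdG}. The two problems also share the same objective, namely $\|\Delta\|_\infty$ over diagonal stable $\Delta$. Once the feasible sets agree, the two infima are identical, and the infimum of \eqref{eqiv:rhod_G} equals $\rho_*$ by definition \eqref{rir}. So the proof amounts to a set equality, obtained directly from Lemma~\ref{lem:stbGk}.

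First, take $\Delta\in\mathbb{S}_d$. Then $\Delta\in\bfDel_d$ and $\Sigma(\Delta,H,A)$ is internally stable. Under the setup of Lemma~\ref{lem:stbGk}, this is equivalent to \eqref{eq:lowrank_S3} with $\Delta_k=R\Delta L$. Since $L$ and $R$ are constant matrices, $\Delta\in\RHinf^{n\times n}$ gives $\Delta_k\in\RHinf^{k\times k}$. By the definition of $\mathbb{S}_f^{G_k}$, which is the analogue of $\mathbb{S}_f^G$ with $G$ replaced by $G_k$ and $n$ by $k$, condition \eqref{eq:lowrank_S3} says exactly that $R\Delta L\in\mathbb{S}_f^{G_k}$. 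Hence $\Delta$ is feasible for \eqref{eqiv:rhod_G}.

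Conversely, suppose $\Delta\in\bfDel_d$ and $R\Delta L\in\mathbb{S}_f^{G_k}$. Then $\Delta_k=R\Delta L$ satisfies \eqref{eq:lowrank_S3}. The "if" direction of Lemma~\ref{lem:stbGk} gives internal stability of $\Sigma(\Delta,H,A)$, so $\Delta\in\mathbb{S}_d$. Note that Lemma~\ref{lem:stbGk} is stated for an arbitrary stable $\Delta$, which covers every element of $\bfDel_d$.

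The only delicate point is checking that the hypotheses of Lemma~\ref{lem:stbGk} hold uniformly over all candidate $\Delta$, in particular Assumption~\ref{ass:stable}. With $W=W_m$ we have $w_{11}=0$, so ${\cal F}_u(\delta_i,W_m)=(1+w_m\delta_i)h$. Since $h,w_m\in\RHinf$, as justified by Lemma~\ref{lem:nec-rankdef}, this is stable for every $\delta_i\in\RHinf$. So the assumption holds for every $\Delta\in\bfDel_d$, not just for a particular one. A second point to confirm is that $L$ and $R$ may be complex while $\Delta$ is real. This does not matter: membership in $\mathbb{S}_f^{G_k}$ is only a stability condition on the transfer matrix in \eqref{eq:lowrank_S3}, which is well defined for complex constant factors, and the proof of Lemma~\ref{lem:stbGk} reduces stability to the same determinant identity either way. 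With these checks the two feasible sets are equal, and the proposition follows.
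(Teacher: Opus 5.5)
Your proof is correct and follows the same argument as the paper, which simply says the result follows directly from Lemma~\ref{lem:stbGk}: you make explicit that the feasible set of \eqref{eqiv:rhod_G} coincides with $\mathbb{S}_d$. Your check that Assumption~\ref{ass:stable} holds for every $\Delta\in\bfDel_d$ because $w_{11}=0$ is a worthwhile detail the paper leaves implicit. One small correction: $h,w_m\in\RHinf$ comes from Assumption~\ref{ass:stable} in the hypotheses, not from Lemma~\ref{lem:nec-rankdef}. That lemma only shows stability of $H$ is necessary for $\mathbb{S}_d\neq\emptyset$; if $h$ were unstable, some $R\Delta L\in\mathbb{S}_f^{G_k}$ could still be feasible while $\rho_*=\infty$.
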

\begin{proof}
The result follows directly from Lemma~\ref{lem:stbGk}.
\end{proof}

Unfortunately, it is difficult to solve (\ref{eqiv:rhod_G}) directly. 
The idea we propose here is that, instead of solving~\eqref{eqiv:rhod_G} for 
$\rho_*$, we solve the bundled problems~\eqref{2steps:1} and~\eqref{2steps:2}.  
We refer to this approach as a ``two-step procedure.''
Note that the infimum of the two-step procedure gives an \emph{upper bound} 
on $\rho_*$. 
To see this, notice that the argument of infimum of 
problem~\eqref{2steps:2} satisfies the constraint in problem~\eqref{eqiv:rhod_G}. 
The following proposition shows that one does acquire an optimal solution 
to~\eqref{eqiv:rhod_G} by the two-step procedure under a certain condition.  

\begin{prop} \label{prop:two_step_v2} 
Suppose that $A$ is diagonally normalizable. 
Then, the infimum of problem~\eqref{eqiv:rhod_G} is equal to that of
problem~\eqref{2steps:2} if the following condition holds: 
An optimizer $\Delta_k^*$ of problem~\eqref{2steps:1}
is diagonal and homogeneous; i.e. $\Delta_k^* = \delta^*I_k$ for some
$\delta^*\in\RHinf$. In this case, 
an optimal solution to problem~\eqref{eqiv:rhod_G} is 
given by $\Delta=\delta^*I_n$ and all the infimum values of 
(\ref{eqiv:rhod_G})--(\ref{2steps:2}) are the same. 
\end{prop}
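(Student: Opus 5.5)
Let $\rho_1,\rho_2,\rho_3$ denote the infima of \eqref{eqiv:rhod_G}, \eqref{2steps:1}, \eqref{2steps:2}. The plan is to establish the chain
\[
\rho_2 \le \rho_1 \le \rho_3 \le \|\delta^*\|_\infty = \rho_2 ,
\]
which forces equality throughout and exhibits $\Delta=\delta^* I_n$ as an optimizer of \eqref{eqiv:rhod_G}.

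The middle inequality $\rho_1\le\rho_3$ is the observation already made before the proposition: any $\Delta$ feasible for \eqref{2steps:2} satisfies $R\Delta L=\Delta_k^*\in\mathbb{S}_f^{G_k}$ and $\Delta\in\bfDel_d$, so it is feasible for \eqref{eqiv:rhod_G}. For $\rho_3\le\|\delta^*\|_\infty$, take $\Delta=\delta^* I_n$. It is diagonal and stable, and since $RL=I_k$ we get $R\Delta L=\delta^* RL=\delta^* I_k=\Delta_k^*$. It is therefore feasible for \eqref{2steps:2}, and its norm is $\|\delta^* I_n\|_\infty=\|\delta^*\|_\infty=\|\Delta_k^*\|_\infty=\rho_2$, because $\Delta_k^*$ is an optimizer of \eqref{2steps:1}.

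The key step, and the main obstacle, is the lower bound $\rho_2\le\rho_1$. We need $\|R\Delta L\|_\infty\le\|\Delta\|_\infty$ for every diagonal $\Delta$; once this holds, every $\Delta$ feasible for \eqref{eqiv:rhod_G} gives a feasible $\Delta_k=R\Delta L$ for \eqref{2steps:1} with no larger norm. In general $\|R\|\,\|L\|$ may exceed one, and this is where diagonal normalizability enters.

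Let $D>0$ be diagonal with $DAD^{-1}$ normal, so $DAD^{-1}=U\Lambda U^*$ with $U$ unitary. The key freedom is that the spectral factors $L,R$ in $A=L\Lambda_k R$ can be chosen as $L=D^{-1}U_k$ and $R=U_k^*D$, where $U_k$ consists of the first $k$ columns of $U$; then $RL=U_k^*U_k=I_k$. Since $D$ commutes with the diagonal $\Delta$, we get $R\Delta L=U_k^*\Delta U_k$, whose norm is at most $\|\Delta\|_\infty$ because $U_k$ has orthonormal columns.

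One point to check is that the feasibility set $\mathbb{S}_f^{G_k}$ and the quantities in \eqref{2steps:1}--\eqref{2steps:2} do not depend on which valid factorization $(L,R)$ is used. If the factorization is fixed in advance, another choice differs from this one by $L'=LS$ and $R'=S^{-1}R$ with $S$ diagonal, owing to the eigenvalue ordering and assuming distinct nonzero eigenvalues. Such an $S$ commutes with $G_k$, and I would note that the argument is simply carried out with the normalized factors, as the paper's setup permits. The proof concludes by combining the chain of inequalities.
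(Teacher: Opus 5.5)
Your proof is correct and follows essentially the same route as the paper. The lower bound comes from $\|R\Delta L\|_\infty\le\|\Delta\|_\infty$ for diagonal $\Delta$ via diagonal normalizability; your normalized factors $L=D^{-1}U_k$, $R=U_k^*D$ are exactly the ones built in the proof of Lemma~\ref{lem:1}. The upper bound comes from feasibility of $\delta^*I_n$, since $RL=I_k$.

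Your version is slightly cleaner in three respects. It argues with infima rather than presuming an optimizer of \eqref{eqiv:rhod_G} exists, and it uses the correct inequality $\|\Delta_k^*\|_\infty\le\|R\Delta L\|_\infty$ where the paper writes an equality. It also makes explicit that the normalized factorization is the one being used, which the paper leaves implicit. On your aside about other factorizations: with repeated eigenvalues, $S$ need only commute with $\Lambda_k$ rather than be diagonal, and that still suffices because $S$ then commutes with both $G_k$ and $\delta^*I_k$.
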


\begin{proof}
We will show that under the supposition and the condition in the proposition, 
the infimum in~\eqref{2steps:2} gives a lower bound
on that in~\eqref{eqiv:rhod_G}, and hence they must be equal. 
Let $\Delta^*$ and $\Delta_k^*$ denote the 
arguments of infimum of problems~\eqref{eqiv:rhod_G} and~\eqref{2steps:1}, respectively. 
Since $R\Delta^*L$ is a feasible solution to problem~\eqref{2steps:1}, we have 
\begin{align*}
\|{\Delta_k^*}\|_{\infty}& = \|R\Delta^*L\|_{\infty}  
= \|RW\Delta^* W^{-1}L\|_{\infty}\\  
&\le\|RW\|\|\Delta^*\|_{\infty}\|W^{-1}L\|   \le \|\Delta^*\|_{\infty} , 
\end{align*}
where $W$ is an invertible diagonal matrix. 
Here, we used the following two facts:  
(i) $\Delta^*$ being diagonal implies $W\Delta^* = \Delta^* W$ and 
(ii) the diagonal normalizability of $A$ leads to $\|RW\|\cdot\|W^{-1}L\| \le 1$ for some $W$ by Lemma~\ref{lem:1} in the Appendix. 

Furthermore, if $\Delta_k^*=\delta^* I_k$, 
then $\Delta := \delta^*I_n$ is a solution to~\eqref{2steps:2} 
since $RL=I_k$. As such, the infimum of~\eqref{2steps:2} 
is less than or equal to $\|{\delta^*}\|_{\infty} 
=\|\Delta_k^*\|_{\infty}\le\|\Delta^*\|_{\infty}$. 
\end{proof}

If $A$ has rank one, then the condition in Proposition~\ref{prop:two_step_v2} is
satisfied because $G_k(s)$ is a scalar transfer function and so is the solution
$\Delta_k^*(s)$ to problem~\eqref{2steps:1}. For the scalar case, the infimum of
problem~\eqref{2steps:1} is fairly well characterized in \cite{hara2023exact}.
Together with this result and the characterization of diagonally normalizable
rank-one matrices in Lemma~\ref{lem:dnrom}, we have the following.

\begin{theorem} \label{thm:dnrom}
Consider the network system $\Sigma(\Delta,H,A)$ with multiplicative perturbations,
i.e., $W=W_m$. Suppose $w_m(s)$ is an outer function, 
Assumptions~\ref{ass:stand}--\ref{ass:Adiag} hold, and $A$ has rank one and all
the nonzero diagonal entries have the same sign. Define $g:=w_m\lambda h/(1-\lambda h)$ where 
$\lambda\in\IR$ is the nonzero eigenvalue of $A$. Define 
$\theta(\omega):=\angle g(j\omega)$, and let $\theta'$ be the derivative of $\theta$.
Then the following hold.
\begin{itemize}
\item[(a)] Suppose $|g(j\omega)|$ takes a unique 
global peak at $\omega=0$. Then $\rho^*=1/\|g\|_\infty$ if 
$g(s)$ has one unstable pole and $\theta'(0)>0$. 
\item[(b)] Suppose $|g(j\omega)|$ takes a unique
global peak at a nonzero frequency $\omega_p$. Then $\rho^*=1/\|g\|_\infty$ if
$g(s)$ has two unstable poles and
$\theta'(\omega_p)>|\sin(\theta(\omega_p))/\omega_p|$. 
\end{itemize}
In each statement, if the condition does not hold, then there exists no stable $\delta(s)$ such that $\|\delta\|_\infty\leq1/\|g\|_\infty$ and $1=\delta(s)g(s)$ is marginally stable.
\end{theorem}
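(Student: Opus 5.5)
The argument is a reduction. Using Lemma~\ref{lem:dnrom} and Proposition~\ref{prop:two_step_v2}, the network problem becomes the scalar problem of \cite{hara2023exact}, whose conditions (a) and (b) are then quoted.

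\emph{Step 1: diagonal normalizability.} Write the rank-one matrix as $A=uv\t$ with $\lambda=v\t u\neq 0$. Diagonalizability (Assumption~\ref{ass:Adiag}) gives $\lambda\neq0$, and $\lambda$ is real because $A$ is real. The diagonal entries are $u_iv_i$, and by hypothesis all nonzero ones share a sign. I will invoke Lemma~\ref{lem:dnrom} to conclude that $A\in\Rset_{DN}^{n\times n}$. The idea is that $u_iv_i$ having a common sign lets one choose $D=\diag(d_i)$ with $d_iu_i$ proportional to $v_i/d_i$, which makes $DAD^{-1}$ symmetric up to sign and hence normal. Indices with $u_iv_i=0$ need care, and this is exactly what the lemma handles.

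\emph{Step 2: reduction to the scalar problem.} Take $k=1$, so $G_k=g$, $L=T e_1$ and $R=e_1\t T^{-1}$. By Proposition~\ref{prop:red}, $\rho_*$ equals the infimum of \eqref{eqiv:rhod_G}. Problem \eqref{2steps:1} is scalar, so any optimizer $\Delta_1^*=\delta^*$ is trivially diagonal and homogeneous. Proposition~\ref{prop:two_step_v2} then gives $\rho_*=\inf\{\|\delta\|_\infty:\delta\in\RHinf,\ \delta \text{ stabilizes } g\}$, which is the scalar RIR of $g$.

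If the infimum in \eqref{2steps:1} is not attained, I will repeat the argument of Proposition~\ref{prop:two_step_v2} with near-optimal $\delta_\epsilon$. The bound $\|R\Delta L\|_\infty\le\|\Delta\|_\infty$ for diagonal $\Delta$ (Lemma~\ref{lem:1}) gives one inequality. The choice $\Delta=\delta_\epsilon I_n$ gives the other.

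\emph{Step 3: the scalar results.} Because $w_m$ is outer and $h\in\RHinf$, $g$ is a SISO function in $\RLinf$ whose unstable poles are the roots of $1-\lambda h$. This is the setting of \cite{hara2023exact}. The small gain bound gives $\rho\ge1/\|g\|_\infty$. Equality holds exactly when some stable $\delta$ with $\|\delta\|_\infty=1/\|g\|_\infty$ makes $1-\delta g$ marginally stable at the peak frequency.

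The PCR analysis there supplies the conditions. For a peak at $\omega=0$ with one unstable pole, the condition is $\theta'(0)>0$. For a peak at $\omega_p\neq0$ with two unstable poles, it is $\theta'(\omega_p)>|\sin\theta(\omega_p)/\omega_p|$. These are the phase-change-rate bounds attainable by an all-pass $\delta$ of degree one or two matching $1/g(j\omega_p)$ in modulus and phase. When the conditions fail, the same paper shows that no stable $\delta$ of norm at most $1/\|g\|_\infty$ renders $1=\delta g$ marginally stable, which is the final claim.

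The main obstacle is Step 1. One has to verify precisely that the sign condition yields diagonal normalizability, including the zero-diagonal indices where $u_i$ or $v_i$ vanishes, and so that Lemma~\ref{lem:dnrom} applies as stated. The second delicate point is the attainment issue in Step 2.
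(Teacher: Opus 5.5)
Your reduction is the same as the paper's: Lemma~\ref{lem:dnrom}, then Proposition~\ref{prop:two_step_v2} with $k=1$, then the scalar results of \cite{hara2023exact}. Your $\epsilon$-argument for a non-attained infimum is a small improvement, since the paper tacitly assumes optimizers exist. However, the step you single out as the main obstacle genuinely fails. The sign condition on the diagonal does \emph{not} imply diagonal normalizability, so Lemma~\ref{lem:dnrom} cannot be invoked as stated. Its direction (ii)$\Rightarrow$(i) asserts a factorization with $\ell_ir_i>0$ or $\ell_i=r_i=0$ for every $i$, and such a factorization need not exist.

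A counterexample: take $A=\begin{bmatrix}1&0\\1&0\end{bmatrix}=uv\t$ with $u=(1,1)\t$ and $v=(1,0)\t$. It has rank one and satisfies $A^2=A$, so it is diagonalizable with $\lambda=1$. Its only nonzero diagonal entry is positive, so the hypothesis holds. For every nonsingular diagonal $D=\diag(d_1,d_2)$ we get $M:=D^{-1}AD=\begin{bmatrix}1&0\\c&0\end{bmatrix}$ with $c=d_1/d_2\neq0$. Then $(MM^*)_{22}=|c|^2\neq0=(M^*M)_{22}$, so $M$ is never normal. Your own recipe shows why: $d_iu_i\propto v_i/d_i$ forces $u_i=0\Leftrightarrow v_i=0$, and this fails when a zero diagonal entry has a nonzero row or column. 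For rank one, diagonal normalizability holds exactly when, in addition to the sign condition, every zero diagonal entry has its entire row and column zero.

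The conclusion survives, because diagonal normalizability is more than you need. Proposition~\ref{prop:two_step_v2} uses it only to get $\|R\Delta L\|_\infty\le\|\Delta\|_\infty$ for diagonal $\Delta$, and that follows directly from the sign condition. For any factorization $A=\lambda LR$ with $RL=1$, you have $L_iR_i=a_{ii}/\lambda$. Also $\lambda=\mathrm{tr}\,A=\sum_i a_{ii}$, which is nonzero by Assumption~\ref{ass:Adiag}. Hence
\[
R\Delta L=\sum_{i=1}^n \frac{a_{ii}}{\mathrm{tr}\,A}\,\delta_i .
\]
When all nonzero $a_{ii}$ share a sign, the weights are real, nonnegative, and sum to one. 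So $R\Delta L\in\RHinf$, and $|R\Delta(j\omega)L|\le\max_i|\delta_i(j\omega)|$ at every frequency. Use this in place of Step~1 and Lemma~\ref{lem:1}. Combined with $\Delta=\delta_\epsilon I_n$ for the reverse inequality, as you already planned, it shows $\rho_*$ equals the scalar radius of $g$, and Step~3 then goes through unchanged.
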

\begin{proof}
From Lemma~\ref{lem:dnrom} in the Appendix, $A$ is diagonally normalizable.
Hence Proposition~\ref{prop:two_step_v2} applies.
Since $A$ is rank one, $G_k$ is a scalar transfer function and so is the optimal 
solution $\Delta_k^*$. Therefore, the infimum of~\eqref{2steps:2}, $\rho^*$,
is equal to the infimum of problem~\eqref{2steps:1}. Then the statements follow 
from the scalar-case results in \cite{hara2023exact}.
\end{proof}

For robust {\em stability} analysis, diagonal normalizability of the network graph implies that the minimum-norm destabilizing perturbation is homogeneous \cite{hara:19}, making the analysis tractable. In contrast, for robust {\em instability} analysis, the diagonal normalizability does not in general imply that the minimum-norm stabilizing perturbation is homogeneous, and hence an additional rank-one property is exploited in the above theorem.

Further results on the RIR are obtained in \cite{NetworkRIR} for the following two cases. 

\begin{itemize}
\item
$A$ is rank-one: The result can be shown by different arguments than
Theorem~\ref{thm:dnrom} that do not require the diagonal normalizability of $A$. 
\item 
$A$ is diagonally normalizable and rank-two with a pair of complex conjugate 
eigenvalues, for which the condition in 
Proposition~\ref{prop:two_step_v2} always holds. 
\end{itemize}

\section{Conclusion}
\label{sec:Concl}

We have considered a class of uncertain multi-agent systems described by 
a feedback connection of the uncertainty $\Delta(s)$, nominal agent dynamics
$H(s)$, and network connectivity matrix $A$, as in 
Fig.~\ref{fig:LFT_UncertainNetwork}. The internal stability of the 
network system is shown to be equivalent to the internal stability of
the feedback system of $\Delta(s)$ and $G(s)$ under a mild condition, where
$G(s)$ is the nominal network transfer function. The characterization of 
the robust instability radius is then reduced to strong stabilization of
$G(s)$ by $\Delta(s)$, whereby a table of upper and lower bounds on the
RIR is obtained (Theorem~\ref{prop:ULsummary}). In the rank deficient case,
the nominal agent stability is shown to be a necessity for the internal
stability of the network, and a two-step procedure is proposed for the RIR
computation. When the network interaction matrix is rank one,
the RIR can be computed exactly under certain conditions (Theorem~\ref{thm:dnrom}).

\appendix
\subsection{Proof of Lemma~\ref{lem:nec-rankdef}}
\label{prof:nec-rankdef}

Since $A$ is rank deficient with semisimple zero eigenvalue(s), 
there exists a nonsingular matrix $T$ such that 
\[
T^{-1}AT=\diag(\Lambda_k,0)
\]
for some $k\times k$ matrix $\Lambda_k$ with $k<n$. In light of \eqref{eq:P}, let 
\[
\hat P:=\hat T^{-1}P\hat T, \hs \hat T:=\diag(T,T), 
\]
and define $\hat\nabla$ and $\hat H$ similarly. Note that 
$\hat H=H$, and define the partitioned blocks of 
$\hat\nabla$ and $H$ as
\[
\hat\nabla=\diag(\hat\nabla_1,0), \hs
\hat\nabla_1:=\diag(\hat\Delta,\Lambda_k), \hs 
\hat\Delta:=T^{-1}\Delta T,
\] \[
H=\begin{mat}{ccc|c} w_{11}I & 0 & w_{12}I & 0 \\
0 & w_{11}I & 0 & w_{12}I \\
w_{21}I & 0 & hI & 0 \\ \hline
0 & w_{21}I & 0 & hI
\end{mat}=:
\begin{mat}{cc} H_{11} & H_{12} \\ H_{21} & H_{22} \end{mat},
\] 
where $\hat\nabla_1$ and $H_{11}$ are $(n+k)\times(n+k)$. Then
\[
\hat P = (I-H\hat\nabla)^{-1}H =
\begin{mat}{cc} \Phi H_{11} & \Phi H_{12} \\
H_{21}\Psi & H_{21}\Psi\hat\nabla_1H_{12}+H_{22}
\end{mat}
\] 
\[
\Phi:=(I-H_{11}\hat\nabla_1)^{-1}, \hs
\Psi:=(I-\hat\nabla_1H_{11})^{-1}
\]
By supposition, $\hat P$ and $\hat\nabla_1$ are stable.

Suppose $w_{12}(s)$ is unstable. Let $\alpha$ be a bounded signal
such that $H_{12}\alpha$ is unbounded. Define $\beta:=\Phi H_{12}\alpha$.
Then $\beta$ is bounded due to stability of $\hat P$, and satisfies
$\gamma=H_{12}\alpha$, where $\gamma:=(I-H_{11}\hat\nabla_1)\beta$.
Let $\gamma$ be partitioned as $\gamma=\col(*,\gamma_o,*)$ compatibly with $H_{11}$. Then $\gamma_o=w_{12}\alpha$.
The signal $w_{12}\alpha$ is unbounded by the choice of $\alpha$,
contradicting the fact that $\gamma$ is bounded because $\beta$ is bounded and the second row
block of $I-H_{11}\hat\nabla_1$ is stable due to $w_{11},\Delta\in\RHinf$.
Thus, it must hold that $w_{12}\in\RHinf$. 

Suppose $w_{21}(s)$ is unstable. Let $\gamma:=\col(0,\gamma_o,0)$ be a bounded 
signal such that $\beta:=H_{21}\gamma$ is unbounded. Define 
$\alpha:=(I-\hat\nabla_1H_{11})\gamma$. Then $\alpha$ is bounded because
the second column block of $I-\hat\nabla_1H_{11}$ is stable due to 
$w_{11},\Delta\in\RHinf$. By construction, bounded $\alpha$ and unbounded
$\beta$ satisfy $\beta=H_{21}\Psi\alpha$, which implies that $H_{21}\Psi$
is unstable. This contradicts stability of $\hat P$, and hence we conclude that
$w_{21}\in\RHinf$.

Finally, $h\in\RHinf$ follows by noting that $H_{21}\Psi$, $\hat\nabla_1$, 
$H_{12}$, and the $(2,2)$ block of $\hat P$ are all stable.

\subsection{Technical Results}

\begin{lemma} \label{lemma:IS2}
Suppose that the network system $\Sigma(\Delta, H, A)$ satisfies 
Assumptions~\ref{ass:stand} and \ref{ass:stable}. 
Then, $\Sigma(\Delta, H, A)$ is internally stable if and only if 
\begin{equation}
\label{eq:Hnabla}
(I - H(s)\nabla(s))^{-1}  \in \RHinf^{2n \times 2n} , 
\end{equation} 
or equivalently  
\begin{equation}
\label{eq:F}
F(s) :=  (M(s)-\Delta(s)N(s))^{-1} 
\in \RHinf^{n \times n},
\end{equation}
where $N(s)M(s)^{-1} = \LFT(H(s),A)$.
\end{lemma}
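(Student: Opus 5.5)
The plan is to prove two equivalences. First, internal stability is equivalent to \eqref{eq:Hnabla}. Second, \eqref{eq:Hnabla} is equivalent to \eqref{eq:F}.

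\emph{Step 1: internal stability $\Leftrightarrow$ \eqref{eq:Hnabla}.}
By Proposition~\ref{prop:IS1}, internal stability is equivalent to $P=(I-H\nabla)^{-1}H\in\RHinf$. The relevant identity is
\[
(I-H\nabla)^{-1}=I+(I-H\nabla)^{-1}H\nabla = I+P\nabla .
\]
\begin{itemize}
\item If \eqref{eq:Hnabla} holds, then $P=(I-H\nabla)^{-1}H$ is a product of stable factors, because $H\in\RHinf$ by Assumption~\ref{ass:stable}. So $P$ is stable.
\item Conversely, if $P\in\RHinf$, then $I+P\nabla$ is stable because $\nabla\in\RHinf$. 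Hence $(I-H\nabla)^{-1}$ is stable.
\end{itemize}
Thus, under stability of $H$ and $\nabla$, internal stability is equivalent to \eqref{eq:Hnabla}.

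\emph{Step 2: \eqref{eq:Hnabla} $\Leftrightarrow$ \eqref{eq:F}.}
I would eliminate the signals block by block. Write $H=W\otimes I_n$, so the blocks are $w_{11}I$, $w_{12}I$, $w_{21}I$ and $hI$. Let $Q_\Delta:=I-w_{11}\Delta$. This is invertible with $Q_\Delta^{-1}\in\RHinf$, because Assumption~\ref{ass:stable} states that ${\cal F}_u(\Delta,H)$ is internally stable.

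Use block Gaussian elimination (a Schur complement) on
\[
I-H\nabla=\begin{bmatrix} I-w_{11}\Delta & -w_{12}A\\ -w_{21}\Delta & I-hA\end{bmatrix},
\]
pivoting on $Q_\Delta$. The Schur complement is
\[
S:=I-hA-w_{21}w_{12}\Delta Q_\Delta^{-1}A .
\]
All transformation factors are stable, and so is $Q_\Delta^{-1}$. Therefore $(I-H\nabla)^{-1}$ is stable if and only if $S^{-1}$ is stable.

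It remains to relate $S$ to $M-\Delta N$. We have
\[
M-\Delta N=(I-w_{11}\Delta)(I-hA)-w_{12}w_{21}\Delta A .
\]
Hence
\[
Q_\Delta^{-1}(M-\Delta N)=I-hA-w_{12}w_{21}Q_\Delta^{-1}\Delta A .
\]
Since $\Delta$ commutes with $Q_\Delta^{-1}$, this equals $S$. So
\[
F=(M-\Delta N)^{-1}=S^{-1}Q_\Delta^{-1}, \qquad S^{-1}=F\,Q_\Delta .
\]
Both $Q_\Delta$ and $Q_\Delta^{-1}$ are stable, so $F\in\RHinf$ if and only if $S^{-1}\in\RHinf$. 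This completes the chain of equivalences.

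\emph{Main obstacle.}
The delicate point is the elimination step in Step~2. One must make sure no hidden unstable cancellation occurs: every factor used must be in $\RHinf$ together with its inverse where needed. This is exactly what requires stable invertibility of $I-w_{11}\Delta$. I would handle it by writing explicit $LDU$ factors, $\begin{bmatrix} I&0\\ X&I\end{bmatrix}$ with $X=w_{21}\Delta Q_\Delta^{-1}$ stable. I would also check that each off-diagonal block of the inverse, such as $Q_\Delta^{-1}w_{12}AS^{-1}$, is stable whenever $S^{-1}$ is.
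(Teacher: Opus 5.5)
Your proposal is correct and follows essentially the same route as the paper: a Schur-complement elimination of $I-H\nabla$ pivoting on $I-w_{11}\Delta$ (stably invertible under Assumption~\ref{ass:stable}), followed by the identity $S^{-1}=F\,(I-w_{11}\Delta)$, which is exactly the paper's $\hat Q_{22}=F(I-w_{11}\Delta)$. The only cosmetic difference is in the first step: you derive the equivalence of internal stability with \eqref{eq:Hnabla} from Proposition~\ref{prop:IS1} via $(I-H\nabla)^{-1}=I+P\nabla$, whereas the paper cites Theorem~5.5 of \cite{zhou1998} directly.
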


\begin{proof} 
Condition \eqref{eq:Hnabla} is a direct consequence of Theorem 5.5 in \cite{zhou1998}, 
since both $H$ and $\nabla$ are in $\RHinf$. 
Hence, we here show the equivalence of \eqref{eq:Hnabla} and \eqref{eq:F} to complete the proof. 
Let 
\begin{equation}
Q := I - H\nabla  = 
\begin{bmatrix} I - H_{11}\Delta  & -H_{12}A \\ -H_{21}\Delta & I - H_{22} A \end{bmatrix} , 
\end{equation} 
and note that $Q_{12} := -H_{12}A$ and $Q_{21} := H_{21}\Delta$ are both 
in $\RHinf$ and $Q_{11}^{-1} := (I - H_{11}\Delta )^{-1}$ is also in $\RHinf$
under Assumption~\ref{ass:stable}.
Applying an inverse formula %
\begin{equation}
\begin{bmatrix} Q_{11}  & Q_{12} \\ Q_{21} & Q_{22} \end{bmatrix}^{-1}  
=   
\begin{bmatrix}  \hat{Q}_{11} & -Q_{11}^{-1}Q_{12}\hat{Q}_{22} \\ 
-\hat{Q}_{22}Q_{21}Q_{11}^{-1}  &  \hat{Q}_{22} \end{bmatrix}
\end{equation} 
where 
\begin{eqnarray} 
\hat{Q}_{11} &=& %
 Q_{11}^{-1}  + Q_{11}^{-1} Q_{12}\hat{Q}_{22}Q_{21} Q_{11}^{-1},  \\
\hat{Q}_{22} &=& (Q_{22} - Q_{21}Q_{11}^{-1}Q_{12})^{-1},
\end{eqnarray} 
It is clear that the stability of $Q^{-1}$ is equivalent to 
\begin{eqnarray}
\hat{Q}_{22} &=& [(I - H_{22}A) - H_{21}\Delta(I - H_{11}\Delta)^{-1}H_{12}A]^{-1}  \nonumber \\ 
&=& [(I - h A) - w_{12}w_{21} (I - w_{11} \Delta)^{-1} \Delta A]^{-1}  \nonumber \\ 
&=& F (I - w_{11} \Delta)  \in \RHinf . 
\end{eqnarray}
Finally, the stability of $\hat{Q}_{22}$ is equivalent to that of $F$,  
since $I - w_{11} \Delta$ is stable and has no unstable zeros. 
\end{proof}

\begin{lemma} \label{lem:1}
Let a real matrix $A\in\IR^{n\times n}$ be given. Denote the rank of $A$ by $r$.
Then the following statements are equivalent:
\begin{itemize}
\item[(i)]  There exists a complex diagonal matrix $D$ such that $D^{-1}AD$ is normal.
\item[(ii)] There exist matrices $L,R^*\in\IC^{n\times r}$ and 
diagonal matrices $\Lambda\in\IC^{r\times r}$ and $W\in\IC^{n\times n}$ such that
\begin{equation} \label{A}
A=L\Lambda R, \hs RL=I, \hs \|RW\|\cdot\|W^{-1}L\|\leq1.
\end{equation}
\end{itemize}
\end{lemma}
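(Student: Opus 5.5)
The plan is to prove the two implications separately. For (i)$\Rightarrow$(ii) we build $L$, $R$, $W$ explicitly from a unitary diagonalization of the normal matrix. For (ii)$\Rightarrow$(i) we show that the norm inequality forces the scaled factors to be (multiples of) an isometry and its adjoint.

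\emph{(i)$\Rightarrow$(ii).} Suppose $N:=D^{-1}AD$ is normal. Since $D$ is nonsingular, $N$ has rank $r$. By the spectral theorem for normal matrices, $N=U\Sigma U^*$ with $U$ unitary and $\Sigma$ diagonal, and exactly $r$ diagonal entries of $\Sigma$ are nonzero. Let $U_r\in\IC^{n\times r}$ collect the corresponding orthonormal eigenvectors and let $\Lambda$ be the $r\times r$ diagonal matrix of the nonzero eigenvalues. Then $N=U_r\Lambda U_r^*$, so
\[
A=DU_r\Lambda U_r^*D^{-1}.
\]
Set $L:=DU_r$, $R:=U_r^*D^{-1}$ and $W:=D$. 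Then $RL=U_r^*U_r=I_r$, $RW=U_r^*$ and $W^{-1}L=U_r$. Both have spectral norm one, so $\|RW\|\cdot\|W^{-1}L\|=1$.

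\emph{(ii)$\Rightarrow$(i).} Here $W$ is implicitly invertible because $W^{-1}$ appears in \eqref{A}. Put $X:=W^{-1}L\in\IC^{n\times r}$ and $Y:=RW\in\IC^{r\times n}$, so that $YX=RL=I_r$. For any unit vector $x\in\IC^r$,
\[
1=\|YXx\|\le\|Y\|\,\|Xx\|\le\|Y\|\,\|X\|\le 1 .
\]
Hence all these inequalities are equalities, and $\|Xx\|=\|X\|=:c$ for every unit $x$. Therefore $X=cV$ with $V^*V=I_r$, and $\|Y\|=1/c$.

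\emph{Key step (the main obstacle): showing $Y=V^*/c$.} From $YX=I_r$ we get $YV=I_r/c$. Decompose $Y^*=(V+Z)/c$, where each column of $Z$ lies in $(\mathrm{range}\,V)^\perp$; this is possible because $V^*Y^*=(YV)^*=I_r/c$. Then for any $u\in\IC^r$, by orthogonality,
\[
\|Y^*u\|^2=\frac{1}{c^2}\big(\|u\|^2+\|Zu\|^2\big).
\]
This must be at most $\|Y\|^2\|u\|^2=\|u\|^2/c^2$, which forces $Zu=0$ for all $u$, i.e.\ $Z=0$. So $Y=V^*/c$.

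\emph{Conclusion.} We now have
\[
W^{-1}AW=W^{-1}L\,\Lambda\,RW=X\Lambda Y=V\Lambda V^*.
\]
Since $V$ has orthonormal columns and $\Lambda$ is diagonal, $V\Lambda V^*$ commutes with its adjoint $V\Lambda^*V^*$, so it is normal. Taking the complex diagonal matrix $D:=W$ gives (i).

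The only delicate point is this equality-case argument, which turns the single scalar inequality $\|RW\|\cdot\|W^{-1}L\|\le 1$ into the structural facts that $W^{-1}L$ is a scaled isometry and $RW$ is its scaled adjoint. The remaining steps are routine.
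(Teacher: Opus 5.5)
Your proof is correct and follows the paper's route. The direction (i)$\Rightarrow$(ii) uses the same construction $L=DU_r$, $R=U_r^*D^{-1}$, $W=D$. For the converse you use the same idea: equality in $\|RW\|\cdot\|W^{-1}L\|\le 1$ forces $W^{-1}L$ and $RW$ to be a scaled isometry and its scaled adjoint, so $W^{-1}AW=V\Lambda V^*$ is normal. The only difference is in execution: the paper invokes the minimum-norm property of the Moore--Penrose inverse to get $\sigma_{\max}=\sigma_{\min}$ and then asserts $\hat R\hat R^*=I$, whereas your equality-case and Pythagorean arguments are self-contained and correctly track the scale factor $c$ that the paper's normalization to $I$ tacitly drops.
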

\begin{proof}
Suppose (i) holds. Define $W:=D$ and a normal matrix $M:=W^{-1}AW$. 
Let $M=U\Lambda U^*$ be a spectral decomposition of $M$ such that 
$\Lambda\in\IC^{r\times r}$ is diagonal, and $U\in\IC^{n\times r}$ is a sub-block 
matrix of a unitary matrix. Then (ii) follows by noting that $\|U\|\leq1$ holds 
and setting $L:=WU$ and $R:=U^*W^{-1}$. Thus (i) $\Rightarrow$ (ii). 
To show the converse, suppose (ii) holds. Let $D:=W$ and $M:=W^{-1}AW$, and define 
$\hat L:=W^{-1}L$ and $\hat R:=RW$. Then, since $\hat R\hat L=I$ holds and 
$X=\hat R^{\dag}$ (the Moore-Penrose inverse) is the smallest spectral norm 
solution of $\hat RX=I$, we have
\[
1\geq\|\hat R\|\cdot\|\hat L\|\geq
\|\hat R\|\cdot\|\hat R^{\dag}\|=\sigma_{\max}/\sigma_{\min},
\]
where $\sigma_{\max}$ and $\sigma_{\min}$ are the max and min singular values of 
$\hat R$. This implies 
\[
\sigma_{\max}=\sigma_{\min} \hs \Rightarrow \hs I\leq \hat R\hat R^*\leq I
\hs \Rightarrow \hs \hat R\hat R^*=I.
\]
By a similar argument, we have $\hat L^* \hat L=I$. Then it follows that
$\hat R=\hat L^*$ holds since a direct calculation shows 
$(\hat L^*-\hat R)(\hat L^*-\hat R)^*=0$.
Finally $M=\hat L\Lambda\hat L^*$ and $\hat L^*\hat L=I$ imply that $M$ is normal. Thus we have (ii) $\Rightarrow$ (i).
\end{proof}

\begin{lemma} \label{lem:dnrom}
Let a real matrix $A\in\IR^{n\times n}$ be given. Suppose $A$ has rank one.
Then the following statements are equivalent:
\begin{itemize}
\item[(i)]  There exists a complex diagonal matrix $D$ such that $D^{-1}AD$ is normal.
\item[(ii)] All the nonzero diagonal entries of $A$ have the same sign.
\end{itemize}
\end{lemma}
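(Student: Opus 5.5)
Write the rank-one matrix as $A=uv\t$ with nonzero $u,v\in\IR^n$, so that $a_{ii}=u_iv_i$. The plan is to characterize when a diagonal similarity $D^{-1}AD=(D^{-1}u)(D^*v)\t$ is normal. The key fact is that a rank-one matrix $xy^*$ with $x,y\neq0$ is normal if and only if $y$ is a scalar multiple of $x$. Indeed, $(xy^*)(xy^*)^*=\|y\|^2xx^*$ and $(xy^*)^*(xy^*)=\|x\|^2yy^*$, and equality of these two rank-one positive semidefinite matrices forces $y\parallel x$; the converse is immediate. With $x=D^{-1}u$ and $y^*=(D^*v)\t$, i.e.\ $y=\overline{D}\,v$ (recall $u,v$ are real), normality of $D^{-1}AD$ is equivalent to the existence of $c\in\IC\setminus\{0\}$ with
\[
\bar d_i v_i = c\, u_i/d_i,\quad\text{i.e.}\quad |d_i|^2 v_i = c\,u_i,\qquad i\in\II_n .
\]

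For (i)$\Rightarrow$(ii), I read off this relation componentwise. First, $u_i=0$ if and only if $v_i=0$, so the zero patterns of $u$ and $v$ coincide. Next, because $|d_i|^2>0$ and $v$ is real and nonzero, $c$ must be real. Multiplying the relation by $u_i$ gives $|d_i|^2 u_iv_i=c\,u_i^2$. Hence every nonzero diagonal entry $a_{ii}=u_iv_i$ has the sign of $c$, which is (ii).

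For (ii)$\Rightarrow$(i), I construct $D$ directly. Let $S=\{i: a_{ii}\neq0\}$ and let $\sigma\in\{\pm1\}$ be the common sign of the nonzero diagonal entries. On $S$, set $d_i=(v_i/(\sigma u_i))^{-1/2}$, i.e.\ $|d_i|^2=\sigma u_i/v_i>0$; then $|d_i|^2v_i=\sigma u_i$, so the relation holds with $c=\sigma$.

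The main obstacle is the indices with $a_{ii}=0$, where exactly one of $u_i,v_i$ may vanish. In that case the relation cannot hold, and in fact normality genuinely fails: for $A=\begin{bmatrix}1&1\\0&0\end{bmatrix}$ the zero pattern is mismatched, yet $D^{-1}AD=\begin{bmatrix}1&t\\0&0\end{bmatrix}$ is not normal for any $t\neq 0$. So the lemma as literally stated seems to need an additional hypothesis. Either "nonzero diagonal entries" should be read as saying $u_i=0\Leftrightarrow v_i=0$, or the intended zero pattern is symmetric, $a_{ij}\neq0\Leftrightarrow a_{ji}\neq0$. Under that hypothesis, indices with $u_i=v_i=0$ can be given $d_i=1$ arbitrarily, and the construction goes through. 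I would therefore prove the lemma under this zero-pattern hypothesis, which also makes $S$ nonempty because $\mathrm{tr}$-type arguments are not needed: $u,v\neq0$ share a support. I would also flag in the paper that this hypothesis is implicitly required wherever Lemma~\ref{lem:dnrom} is applied in Theorem~\ref{thm:dnrom}.
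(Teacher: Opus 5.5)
Your proposal is correct, and so is your objection: the lemma is false as literally stated. Your example $A=\begin{bmatrix}1&1\\0&0\end{bmatrix}$ is a valid counterexample. It is also diagonalizable with eigenvalue $1\neq0$, so it satisfies everything Theorem~\ref{thm:dnrom} assumes about $A$. The nilpotent $\begin{bmatrix}0&1\\0&0\end{bmatrix}$, for which (ii) holds vacuously, is a second counterexample, and your support hypothesis excludes it as well.

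The paper's proof of (ii)$\Rightarrow$(i) uses the same scaling as yours, $d_i=\sqrt{\ell_i/r_i}$. However, it opens by asserting that (ii) yields a factorization $A=\lambda\ell r$ with $\ell_ir_i>0$ or $\ell_i=r_i=0$ for every $i$. Such a factorization is unique up to $\ell\mapsto c\ell$, $r\mapsto r/c$. That assertion is therefore exactly your condition $\{i:u_i\neq0\}=\{i:v_i\neq0\}$, and it does not follow from (ii). It fails for your example, where $\ell\propto e_1$ and $r\propto(1,1)$. So the paper's proof has a gap precisely where you located it. The correct statement is that (i) holds if and only if (ii) holds and the supports coincide, and your two directions establish exactly this.

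For (i)$\Rightarrow$(ii), the paper writes $D^{-1}AD=\lambda\xi\xi^*$ and uses a common-phase argument to obtain real $\ell,r$ with $\ell_ir_i=|\xi_i|^2$. Your criterion ($xy^*$ is normal iff $y\parallel x$) reaches $|d_i|^2v_i=c\,u_i$ more directly and delivers the support condition at no extra cost. One slip: $D^{-1}AD=(D^{-1}u)(Dv)\t$, not $(D^*v)\t$; the $y=\bar D v$ you actually use is the correct one.

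Your remark about Theorem~\ref{thm:dnrom} is accurate for its proof but not for its statement. The only thing Proposition~\ref{prop:two_step_v2} needs from diagonal normalizability is $\|R\Delta L\|_\infty\le\|\Delta\|_\infty$ for diagonal $\Delta$. In the rank-one case, $R\Delta L=\sum_i r_i\ell_i\delta_i$ with $\sum_i r_i\ell_i=RL=1$. Since $a_{ii}=\lambda\ell_ir_i$ with $\lambda\neq0$, condition (ii) forces all nonzero $r_i\ell_i$ to share a sign, and that sign must be positive because they sum to $1$. Hence $|\sum_i r_i\ell_i\delta_i(j\omega)|\le\max_i|\delta_i(j\omega)|$. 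The conclusion of Theorem~\ref{thm:dnrom} therefore survives under its stated hypotheses once the appeal to Lemma~\ref{lem:dnrom} is replaced by this direct estimate; no extra support hypothesis is needed there.
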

\begin{proof}
Suppose (i) holds. Let a spectral decomposition of $M:=D^{-1}AD$ be given by
$M=\lambda uu^*$, where $\lambda\in\IR$ and $u\in\IC^n$. Define $\hat\ell:=Du$ and
$\hat r:=u^*D^{-1}$. Then $\hat\ell_i\hat r_i=|u_i|^2>0$ if $u_i\neq0$ and 
$\hat\ell_i=\hat r_i=0$ if $u_i=0$.
Since $A=\lambda\hat\ell\hat r$ is a real matrix, $\hat\ell_i\hat r_j$ is real for 
all $i$ and $j$. This implies that there exist $\ell,r\t\in\IR^n$ and $\phi\in\IR$ 
such that $\hat\ell=\ell e^{j\phi}$ and $\hat r=re^{-j\phi}$. Therefore, we have
$A=\lambda\ell r$ with either $\ell_ir_i>0$ or $\ell_i=r_i=0$ for all $i$. Thus, we 
conclude (i) $\Rightarrow$ (ii). 
To show the converse, suppose (ii) holds. Then matrix $A$ admits a factorization 
$A=\lambda\ell r$ with $\lambda\in\IR$ and $\ell,r\t\in\IR^n$ such that either
$\ell_ir_i>0$ or $\ell_i=r_i=0$ holds for all $i=1,\ldots,n$.
Let $d_i:=\sqrt{\ell_i/r_i}$ if $r_i\neq0$ and $d_i:=1$ if $r_i=0$. Define
\[
M:=D^{-1}AD=\lambda D^{-1}\ell rD=\lambda \hat\ell \hat r, \hs
\hat\ell_i:=\ell_i/d_i, \hs \hat r_i:=d_ir_i.
\]
Note that $\hat\ell_i=d_ir_i=\hat r_i$ if $r_i\neq0$, and $\hat\ell_i=\hat r_i=0$ 
if $r_i=0$. Hence we have $\hat\ell=\hat r\t$ and $M=\lambda \hat\ell\hat\ell\t$.
Since $M$ is symmetric, it is normal, and we have (ii) $\Rightarrow$ (i). 
\end{proof}

\end{document}